\def\>{\ensuremath{\rangle}}
\def\<{\ensuremath{\langle}}
\def\h{\ensuremath{\mathcal{H}}}
\def\lh{\ensuremath{\mathcal{L(H)}}}
\def\dh{\ensuremath{\mathcal{D(H})}}
\def\r{\ensuremath{\mathcal{R}}}
\def\ra{\ensuremath{\rightarrow}}
\def\e{\ensuremath{\mathcal{E}}}
\def\d{\ensuremath{\mathcal{D}}}
\def\dh{\ensuremath{\mathcal{D(H)}}}
\def\lh{\ensuremath{\mathcal{L(H)}}}
\newcommand {\nil} {\mbox{\bf{nil}}}
\newcommand {\iif} {\mbox{\bf{if}}}
\newcommand {\then} {\mbox{\bf{then}}}
\newcommand {\eelse} {\mbox{\bf{else}}}
\newcommand{\ptr}{{\rm ptr}}
\newcommand{\rto}[1]{\stackrel{#1}\longrightarrow}
\newcommand{\Rto}[1]{\stackrel{#1}\Longrightarrow}
\newcommand{\define}{\stackrel{def}=}
\newcommand{\obis}{\approx_o}
\newcommand{\dist}[1]{\mathop{\mbox{$\mathcal D$}} ({#1})   } 
\newcommand{\pdist}[1]{\overline{#1}  } 
\newcommand{\support}[1]{\lceil{#1}\rceil}
\newcommand{\lift}[1]{\mathrel{{#1}^\dag}}
\newcommand{\Act}{\ensuremath{\mathsf{Act}}\xspace}
\newcommand{\setof}[2]{\{ \, #1 \, \mid \, #2 \, \}}
\newcommand{\pair}[1]{\langle{#1}\rangle}
\newcommand{\so}{{\cal SO}}
\newcommand{\Con}{{\it Con}}
\newcommand{\qv}{{\it qv}}
\newcommand{\qc}{{\tt c}}
\newcommand{\tr}{{\rm tr}}
\newcommand{\CE}{{\cal E}}
\newcommand{\CH}{{\cal H}}
\newcommand{\CL}{{\cal L}}
\newcommand{\CC}{{\cal C}}
\newcommand{\CD}{{\cal D}}
\newcommand{\ifthen}[2]{{\textbf{if} ~#1~ \textbf{then} ~#2}}
\newcommand{\barb}[2]{\mathop{{\Downarrow}_{#1}^{#2}}}
\newcommand{\hyperTS}[1]{\mathrel{\dar{#1}\kern-.4em\blacktriangleright}}
\newcommand{\diam}[1]{\langle#1\rangle}
\newcommand{\rbc}{\approx_{r}}
\newcommand{\obisi}{\approx_{o}}
\newtheorem{theorem}{Theorem}[section]
\newtheorem{lemma}[theorem]{Lemma}
\newtheorem{definition}[theorem]{Definition}
\newtheorem{proposition}[theorem]{Proposition}
\newtheorem{corollary}[theorem]{Corollary}
\begin{document}
\title{Open Bisimulation for Quantum Processes}
\author{Yuxin Deng$^1$ \qquad Yuan Feng$^2$ \\
$^1$Shanghai Jiao Tong University, China\\
$^2$ University of Technology, Sydney, Australia, and Tsinghua University, China
}

\maketitle

\begin{abstract}
Quantum processes describe concurrent communicating systems that may involve quantum information.
We propose a notion of open bisimulation for quantum processes and show that it provides both a sound and complete proof methodology for a natural extensional behavioural equivalence between quantum processes. We also give a modal characterisation of open bisimulation,
by extending the Hennessy-Milner logic to a quantum setting.
\end{abstract}

\section{Introduction}

The theory of quantum computing has attracted considerable research
efforts in the past twenty years. Benefiting from the
superposition of quantum states and linearity of quantum
operations, quantum computing may provide considerable speedup over
its classical analogue \cite{Sh94,Gr96,Gr97}. 
However, functional quantum computers
which can harness this potential in dealing with practical applications  are extremely difficult to
implement. On the other hand,  quantum cryptography, of which the security and ability to detect
the presence of eavesdropping are provable based on the principles of quantum mechanics,
has been developed so rapidly that quantum
cryptographic systems are already commercially available by a number of companies such as Id Quantique, Cerberis, MagiQ Technologies, SmartQuantum, and NEC. 

As is well known, it is very difficult to guarantee the correctness of classical communication protocols
at the design stage, and some simple protocols were finally found to have fundamental flaws.
Since human intuition is poorly adapted to the quantum world, quantum protocol designers will definitely make more faults than classical
protocol designers, especially when more and more complicated quantum protocols can be
implemented by future physical technology.
In view of the success that classical process algebras \cite{ccs,Hoa85,BW90} achieved  in
analyzing and verifying classical communication protocols, several
research groups proposed various quantum process algebras with the
purpose of modeling
quantum protocols.
Jorrand and Lalire \cite{JL04, La06} 
defined a language QPAlg (Quantum
Process Algebra) by adding primitives expressing unitary
transformations and quantum measurements, as well as communications of
quantum states, to a CCS-like classical process algebra. An operational
semantics of QPAlg is given, and further a probabilistic branching
bisimulation between quantum processes is defined.
Gay and Nagarajan \cite{GN05, GN06} proposed a language CQP (Communicating Quantum Processes),
which is obtained from the pi-calculus \cite{MP92} by adding primitives for measurements
and transformations of quantum states, and allowing transmission of qubits.
They presented a type system for CQP,
and in particular proved that the semantics preserves typing and that typing
guarantees that each qubit is owned by a unique process within a system.
The second author of the current paper, together with his colleagues, proposed a language named qCCS \cite{FDJY07, YFDJ09, FDY11} for quantum communicating systems by adding quantum input/output and quantum operation/measurement primitives to classical value-passing CCS \cite{He91,HI93}.
One distinctive feature of qCCS, compared to QPAlg and CQP, is that it provides a framework to describe, as well as reason about, the communication of quantum systems
which are entangled with other systems. Furthermore, a bisimulation for processes in qCCS has been introduced, and the associated bisimilarity is proven  to be a congruence with respect to all process constructors of qCCS. Uniqueness of the solutions
to recursive process equations is also established, which provides
a powerful proof technique for verifying complex quantum protocols.

In the study of quantum systems, as well as classical communicating systems, an important problem
is to tell if two given systems exhibit the same behaviour. To approach the problem we first need to give criteria for reasonable behavioural equivalence. Two systems should only be distinguished on the basis of the chosen criteria. Therefore, these criteria induce an extensional equivalence between systems, $\approx_{\text{\tiny behav}}$, namely the largest equivalence which satisfies them. 

Having an independent notion of which systems should, and which should not,
be distinguished, one can then justify a particular notion of equivalence, e.g. bisimulation,
by showing that it captures precisely the touchstone equivalence. In other words, a particular definition of bisimulation
is appropriate because $\approx_{\text{\tiny bisi}}$, the associated bisimulation equivalence,
\begin{enumerate}[(i)]
\item is \emph{sound} with respect to the touchstone equivalence, that is
$s_1 \bisim s_2$ implies $s_1 \approx_{\text{\tiny behav}} s_2$; 

\item  provides a \emph{complete} proof methodology for the
touchstone equivalence, that is $s_1 \approx_{\text{\tiny behav}} s_2$ implies
$s_1 \bisim s_2$.
\end{enumerate}

This approach originated in \cite{ht92} but has now been
widely used for different process description languages; for example,
see \cite{JeffreyRathke05hopi,EnvBisim07} for its application to
higher-order process languages, \cite{RathkeS08} for mobile ambients,
\cite{FournetG05} for asynchronous languages and \cite{DH11a} for probabilistic timed languages. Moreover, in each
case the distinguishing criteria are more or less the same. The
touchstone equivalence should
\begin{enumerate}[(i)]
\item be \emph{compositional}; that is preserved by some natural operators for
constructing systems;

\item \emph{preserve barbs}; barbs are simple experiments which observers may perform
on systems \cite{julian};

\item be \emph{reduction-closed}; this is a natural condition on the reduction semantics of
systems which ensures that  nondeterministic choices are in some sense preserved.
\end{enumerate}

We adapt this approach to quantum processes. Using natural versions of these criteria
we obtain an appropriate touchstone equivalence, which we call
\emph{reduction barbed congruence}, $\rbc$. We then develop a theory
of bisimulations which is both sound and complete for $\rbc$.
Moreover, we provide a modal characterisation of $\rbc$ in
a quantum logic based on Hennessy-Milner logic \cite{HM85}by establishing the coincidence of the largest bisimilation with logical equivalence.

The remainder of the paper is organised as follows.  In the next section we recall some preliminary concepts from quantum theory. In Section~\ref{sec:plts} we review the model of probabilistic labelled transition systems, based on which we give the operational semantics of qCCS in Section~\ref{sec:qccs}. Section~\ref{sec:scs} contains the main theoretical results of the paper. We define a notion of open bisimulation, which is shown to be a congruence relation in the language of qCCS. It turns out that open bisimilarity precisely captures reduction barbed congruence, thus provides a sound and complete proof methodology for our touchstone equivalence. In addition, we give a modal characterisation of the equivalence in a quantum logic obtained by an extension of Hennessy-Milner logic with a probabilistic choice modality and a super-operator application modality. To illustrate the application of open bisimulation and its modal characterisation, in Section~\ref{sec:examples} we describe the key distribution protocol BB84 as qCCS processes and compare a specification with its implementations of the protocol. The paper ends with a brief comparison with related work in Section~\ref{sec:conl}.
\section{Preliminaries on quantum mechanics}
In this section, we briefly recall some basic concepts from quantum theory, which requires first  some notions from linear algebra. 
More details about quantum computation can be found in many books, e.g. \cite{NC00}.

\subsection{Basic linear algebra}
A {\it Hilbert space} $\h$ is a complete vector space equipped with an inner
product $$\langle\cdot|\cdot\rangle:\h\times \h\rightarrow \mathbf{C}$$
such that 
\begin{enumerate}
\item
$\langle\psi|\psi\rangle\geq 0$ for any $|\psi\>\in\h$, with
equality if and only if $|\psi\rangle =0$;
\item
$\langle\phi|\psi\rangle=\langle\psi|\phi\rangle^{\ast}$;
\item
$\langle\phi|\sum_i c_i|\psi_i\rangle=
\sum_i c_i\langle\phi|\psi_i\rangle$,
\end{enumerate}
where $\mathbf{C}$ is the set of complex numbers, and for each
$c\in \mathbf{C}$, $c^{\ast}$ stands for the complex
conjugate of $c$. For any vector $|\psi\rangle\in\h$, its
length $|||\psi\rangle||$ is defined to be
$\sqrt{\langle\psi|\psi\rangle}$, and it is said to be {\it normalized} if
$|||\psi\rangle||=1$. Two vectors $|\psi\>$ and $|\phi\>$ are
{\it orthogonal} if $\<\psi|\phi\>=0$. An {\it orthonormal basis} of a Hilbert
space $\h$ is a basis $\{|i\rangle\}$ where each $|i\>$ is
normalized and any pair of them are orthogonal.

Let $\lh$ be the set of linear operators on $\h$.  For any $A\in
\lh$, $A$ is {\it Hermitian} if $A^\dag=A$ where
$A^\dag$ is the adjoint operator of $A$ such that
$\<\psi|A^\dag|\phi\>=\<\phi|A|\psi\>^*$ for any
$|\psi\>,|\phi\>\in\h$. The fundamental {\it spectral theorem} \cite{NC00} states that
the set of all normalized eigenvectors of a Hermitian operator in
$\lh$ constitutes an orthonormal basis for $\h$. That is, there exists
a so-called spectral decomposition for each Hermitian $A$ such that
$$A=\sum_i\lambda_i |i\>\<i|=\sum_{\lambda_{i}\in spec(A)}\lambda_i E_i$$
where the set $\{|i\>\}$ constitutes an orthonormal basis of $\h$, $spec(A)$ denotes the set of
eigenvalues of $A$,
and $E_i$ is the projector to
the corresponding eigenspace of $\lambda_i$.
A linear operator $A\in \lh$ is {\it unitary} if $A^\dag A=A A^\dag=I_\h$, where $I_\h$ is the
identity operator on $\h$. 
For instance, a well-known unitary operator is the 1-qubit Hadamard
operator $H$ defined as follows:
\[
H=\frac{1}{\sqrt{2}}\left(%
\begin{array}{cc}
  1 & 1 \\
  1 & -1 \\
\end{array}%
\right). 
\]


The {\it  trace} of $A\in\lh$ is defined as $\tr(A)=\sum_i \<i|A|i\>$ for some
given orthonormal basis $\{|i\>\}$ of $\h$. It is worth noting that
trace function is actually independent of the chosen orthonormal basis. It is also easy to check that trace function is linear and
$\tr(AB)=\tr(BA)$ for any operators $A,B\in \lh$.

Let $\h_1$ and $\h_2$ be two Hilbert spaces. Their {\it tensor product} $\h_1\otimes \h_2$ is
defined as a vector space consisting of
linear combinations of the vectors
$|\psi_1\psi_2\rangle=|\psi_1\>|\psi_2\rangle =|\psi_1\>\otimes
|\psi_2\>$ with $|\psi_1\rangle\in \h_1$ and $|\psi_2\rangle\in
\h_2$. Here the tensor product of two vectors is defined by a new
vector such that
$$\left(\sum_i \lambda_i |\psi_i\>\right)\otimes
\left(\sum_j\mu_j|\phi_j\>\right)=\sum_{i,j} \lambda_i\mu_j
|\psi_i\>\otimes |\phi_j\>.$$ Then $\h_1\otimes \h_2$ is also a
Hilbert space where the inner product is defined as the following:
for any $|\psi_1\>,|\phi_1\>\in\h_1$ and $|\psi_2\>,|\phi_2\>\in
\h_2$,
$$\<\psi_1\otimes \psi_2|\phi_1\otimes\phi_2\>=\<\psi_1|\phi_1\>_{\h_1}\<
\psi_2|\phi_2\>_{\h_2}$$ where $\<\cdot|\cdot\>_{\h_i}$ is the inner
product of $\h_i$. For any $A_1\in \mathcal{L}(\h_1)$ and $A_2\in
\mathcal{L}(\h_2)$, $A_1\otimes A_2$ is defined as a linear operator
in $\mathcal{L}(\h_1 \otimes \h_2)$ such that for each
$|\psi_1\rangle \in \h_1$ and $|\psi_2\rangle \in \h_2$,
$$(A_1\otimes A_2)|\psi_1\psi_2\rangle = A_1|\psi_1\rangle\otimes
A_2|\psi_2\rangle.$$  The {\it partial trace} of $A\in\mathcal{L}(\h_1
\otimes \h_2)$ with respected to $\h_1$ is defined as
$\tr_{\h_1}(A)=\sum_i \<i|A|i\>$ where $\{|i\>\}$ is an orthonormal
basis of $\h_1$. Similarly, we can define the partial trace of $A$
with respected to $\h_2$. Partial trace functions are also
independent of the orthonormal basis selected.

An operator $A\in\lh$ is \emph{positive} if $\<\psi|A\psi\> \geq 0$ for every $\psi\in\h$.
A linear operator $\e$ on $\lh$ is {\it completely positive} if it maps
positive operators in $\mathcal{L}(\h)$ to positive operators in
$\mathcal{L}(\h)$, and for any auxiliary Hilbert space $\h'$, the
trivially extended operator $\mathcal{I}_{\h'}\otimes \e$ also maps
positive operators in $\mathcal{L(H'\otimes H)}$ to positive
operators in $\mathcal{L(H'\otimes H)}$. Here $\mathcal{I}_{\h'}$ is
the identity operator on $\mathcal{L(H')}$. The elegant and powerful
{\it Kraus representation theorem} \cite{Kr83} of completely positive
operators states that a linear operator $\e$ is completely positive
if and only if there is some set of operators $\{E_i\}$ with appropriate dimension such that
$$
\e(A)=\sum_{i} E_iA E_i^\dag
$$
for any $A\in \lh$. The operators $E_i$ are called Kraus operators
of $\e$. A linear operator is said to be a {\it super-operator} if it is
completely positive and trace-nonincreasing. Here an operator $\e$ is
{\it trace-nonincreasing} if $\tr(\e(A))\leq \tr(A)$ for any positive $A\in \lh$, and it is said to be
{\it trace-preserving} if the equality always holds. Then a super-operator (resp. a 
trace-preserving super-operator) is 
a completely positive operator
with its Kraus operators $E_i$ satisfying $\sum_i E_i^\dag E_i\leq I$ (resp. $\sum_i E_i^\dag E_i= I$).
We denote by $\so(\CH)$ the set of trace-preserving super-operators on the Hilbert space $\CH$.

\subsection{Basic quantum mechanics}

According to von Neumann's formalism of quantum mechanics
\cite{vN55}, an isolated physical system is associated with a
Hilbert space which is called the {\it state space} of the system. A {\it pure state} of a
quantum system is a normalized vector in its state space, and a
{\it mixed state} is represented by a density operator on the state
space. Here a density operator $\rho$ on Hilbert space $\h$ is a
positive linear operator such that $\tr(\rho)= 1$. 
Another
equivalent representation of density operator is probabilistic
ensemble of pure states. In particular, given an ensemble
$\{(p_i,|\psi_i\rangle)\}$ where $p_i \geq 0$, $\sum_{i}p_i=1$,
and $|\psi_i\rangle$ are pure states, then
$\rho=\sum_{i}p_i[|\psi_i\rangle]$ is a density
operator. Here $[|\psi_i\rangle]$ denotes the abbreviation of
$|\psi_i\>\langle\psi_i|$. Conversely, each density operator can be generated by an
ensemble of pure states in this way.  The set of
density operators on $\h$ can be defined as
$$\dh=\{\ \rho\in\lh\ :\  \rho\mbox{ is positive and } \tr(\rho)=
\mbox{1}\}.$$ 

The state space of a composite system (for example, a quantum system
consisting of many qubits) is the tensor product of the state spaces
of its components. For a mixed state $\rho$ on $\h_1 \otimes \h_2$,
partial traces of $\rho$ have explicit physical meanings: the
density operators $\tr_{\h_1}\rho$ and $\tr_{\h_2}\rho$ are exactly
the reduced quantum states of $\rho$ on the second and the first
component system, respectively. Note that in general, the state of a
composite system cannot be decomposed into tensor product of the
reduced states on its component systems. A well-known example is the
 2-qubit state
$$|\Psi\>=\frac{1}{\sqrt{2}}(|00\>+|11\>).
$$
This kind of state is called {\it entangled state}.
To see the strangeness of entanglement, suppose a measurement $M=
\lambda_0[|0\>]+\lambda_1[|1\>]$ is applied on the first qubit
of $|\Psi\>$ (see the following for the definition of
quantum measurements). Then after the measurement, the second qubit will
definitely collapse into state $|0\>$ or $|1\>$ depending on whether
the outcome $\lambda_0$ or $\lambda_1$ is observed. In other words,
the measurement on the first qubit changes the state of the second
qubit in some way. This is an outstanding feature of quantum mechanics
which has no counterpart in classical world, and is the key to many
quantum information processing tasks  such as teleportation
\cite{BB93} and superdense coding \cite{BW92}.

The evolution of a closed quantum system is described by a unitary
operator on its state space: if the states of the system at times
$t_1$ and $t_2$ are $\rho_1$ and $\rho_2$, respectively, then
$\rho_2=U\rho_1U^{\dag}$ for some unitary operator $U$ which
depends only on $t_1$ and $t_2$. In contrast, the general dynamics which can occur in a physical system is
described by a trace-preserving super-operator on its state space. 
Note that the unitary transformation $U(\rho)=U\rho U^\dag$ is
a trace-preserving super-operator. 

A quantum {\it measurement} is described by a
collection $\{M_m\}$ of measurement operators, where the indices
$m$ refer to the measurement outcomes. It is required that the
measurement operators satisfy the completeness equation
$\sum_{m}M_m^{\dag}M_m=I_\h$. If the system is in state $\rho$, then the probability
that measurement result $m$ occurs is given by
$$p(m)=\tr(M_m^{\dag}M_m\rho),$$ and the state of the post-measurement system
is $M_m\rho M_m^{\dag}/p(m).$ 

A particular case of measurement is {\it projective measurement} which is usually represented by a Hermitian operator.  Let  $M$ be a
Hermitian operator and
\begin{equation}\label{eq:specdec}
M=\sum_{m\in spec(M)}mE_m
\end{equation} 
its spectral decomposition. Obviously, the projectors  $\{E_m:m\in
spec(M)\}$ form a quantum measurement. If the state of a quantum
system is $\rho$, then the probability that result $m$ occurs when
measuring $M$ on the system is $p(m)=\tr(E_m\rho),$ and the
post-measurement state of the system is $E_m\rho E_m/p(m).$
Note that for each outcome $m$, the map $$\e_m(\rho) =
E_m\rho E_m$$
is again a super-operator by Kraus Theorem; it is not
trace-preserving in general.

Let $M$ be a projective measurement with Eq.(\ref{eq:specdec}) its spectral decomposition. We call $M$ non-degenerate if for any $m\in spec(M)$, the corresponding projector $E_{m}$ is 1-dimensional; that is, all eigenvalues of $M$ are non-degenerate. Non-degenerate measurement is obviously a very special case of general quantum measurement. However, when an ancilla system lying at a fixed state is provided, non-degenerate measurements together with unitary operators are sufficient to implement general measurements~\cite{NC00}.

\section{A probabilistic model}
\label{sec:plts}

In this section we review the model of probabilistic labelled
transition systems (pLTSs), and some properties of weak transitions. Later on we will interpret the behaviour
of quantum processes in terms of pLTSs.

\subsection{Probabilistic labelled transition systems}
\label{subsec:plts}

We begin with some notation. A (discrete) probability distribution
over a set S is a function $\Delta : S \rightarrow [0, 1] $ with
$\sum_{s\in S} \Delta(s) = 1$; the support of such a $\Delta$ is
the set $\support{\Delta} = \setof{s \in S}{\Delta(s) > 0}$.  
The point distribution $\pdist{s}$ assigns probability
$1$ to $s$ and $0$ to all other elements of $S$, so that
$\support{\pdist{s}} = s$. We use $\dist{S}$ to denote the set of
distributions over $S$, ranged over by $\Delta,\Theta$ etc.
If $\sum_{k \in K} p_k = 1$ for some
collection of  $p_k \geq 0$, and the $\Delta_k$ are distributions,
then so is $\sum_{k \in K}p_k \cdot \Delta_k$ with
$(\sum_{k \in K}p_k \cdot \Delta_k)(s)~=~\sum_{i\in I} p_i\cdot \Delta_i(s).$

\begin{definition}\label{def:LTS}
  A \emph{probabilistic labelled transition system} (pLTS) is a triple
$\langle S, \Act_\tau,  \rightarrow  \rangle$, where
\begin{enumerate}[(i)] \parskip 0pt
\item $S$ is a set of states; 
\item $\Act_\tau$ is a set of transition labels, with distinguished element $\tau$;
\item the relation $\rightarrow$ is a subset of
$S \times \Act_{\tau}  \times \dist{S}$.
\end{enumerate}
\end{definition}

In the literature essentially the same model has appeared under different names such as \emph{NP-systems}
\cite{JHW94}, \emph{probabilistic processes} \cite{JW95},
\emph{simple probabilistic automata} \cite{Seg95},
\emph{probabilistic transition systems} \cite{JW02} etc.
Furthermore, there are strong structural similarities with \emph{Markov
  Decision Processes} \cite{Put94,scalar}.

A (non-probabilistic) labelled transition system (LTS) may be viewed
as a degenerate pLTS, one in which only point distributions are
used.


\subsection{Lifting relations}

In a pLTS actions are only performed by states, in that actions are
given by relations from states to distributions. But in general we
allow distributions over states to perform an action. For this
purpose, we \emph{lift} these relations so that they also apply to
distributions \cite{DGHM09}.
\begin{definition}[Lifting]\label{def:lift}
Let  $\mathord{\aRel} \subseteq
  S\times\dist{S}$ be a relation from states to distributions in a pLTS.
Then $\mathord{\lift{\aRel}} \subseteq \dist{S} \times
\dist{S}$ is the smallest relation that satisfies
\begin{enumerate}[(i)]\itemsep 0pt

\item $s \aRel \Theta$ implies $\pdist{s} \lift{\aRel} \Theta$, and
\item\label{i1529} (Linearity)
 $\Delta_i \lift{\aRel} \Theta_i$ for $i\in I$ implies
 $(\sum_{i\in I}p_i\cdot\Delta_i)\lift{\aRel}(\sum_{i\in I}p_i\cdot\Theta_i)$
 for any $p_i \in [0,1]$ with $\sum_{i\in I}p_i = 1$, where $I$ is a
 finite index set.
\end{enumerate}
\end{definition}
\noindent

There are numerous ways of formulating this concept of lifting relations. The following is
particularly useful.
\begin{lemma}\label{lem:lifting.alt}
  $\Delta \lift{\aRel} \Theta$ if and only if there is a finite index set $I$ such that
    \begin{enumerate}[(i)]
    \item $\Delta = \sum_{i \in I} p_i \cdot \pdist{s_i}$,
    \item $\Theta = \sum_{i \in I} p_i \cdot \Theta_i$,
    \item $s_i \aRel \Theta_i$ for each $i \in I$.
    \end{enumerate}
\end{lemma}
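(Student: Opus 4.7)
My plan is to prove the two directions separately, with the ``if'' direction being an easy consequence of the inductive clauses defining the lifting, and the ``only if'' direction proved by structural induction on the inductive definition of $\lift{\aRel}$ (or equivalently, by showing that the relation described on the right-hand side is already closed under the two defining clauses, hence contains the smallest such relation).

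For the ``if'' direction, I would simply observe that clause (i) of Definition 2.3 gives $\pdist{s_i} \lift{\aRel} \Theta_i$ from $s_i \aRel \Theta_i$ for each $i \in I$, and then the linearity clause (ii) applied with the weights $p_i$ yields
\[
\Delta \;=\; \sum_{i \in I} p_i \cdot \pdist{s_i} \;\lift{\aRel}\; \sum_{i \in I} p_i \cdot \Theta_i \;=\; \Theta.
\]

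For the ``only if'' direction, let $R \subseteq \dist{S} \times \dist{S}$ be the relation consisting of those pairs $(\Delta,\Theta)$ admitting a finite decomposition as in (i)--(iii). Since $\lift{\aRel}$ is defined as the smallest relation closed under the two clauses of Definition 2.3, it suffices to verify that $R$ itself is closed under both clauses. Closure under clause (i) is immediate: if $s \aRel \Theta$, take $I=\{\star\}$ with $p_\star=1$, $s_\star = s$, $\Theta_\star = \Theta$. For closure under linearity, suppose $\Delta_k \mathrel{R} \Theta_k$ for $k \in K$ with decompositions $\Delta_k = \sum_{i \in I_k} p_i^{(k)} \cdot \pdist{s_i^{(k)}}$ and $\Theta_k = \sum_{i \in I_k} p_i^{(k)} \cdot \Theta_i^{(k)}$, and let $q_k \in [0,1]$ with $\sum_k q_k = 1$. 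Then form the disjoint union index set $J = \{(k,i) : k \in K,\ i \in I_k\}$ with weights $r_{(k,i)} = q_k\, p_i^{(k)}$, states $t_{(k,i)} = s_i^{(k)}$, and target distributions $\Xi_{(k,i)} = \Theta_i^{(k)}$. A direct computation shows $\sum_k q_k\cdot\Delta_k = \sum_{(k,i)\in J} r_{(k,i)} \cdot \pdist{t_{(k,i)}}$ and similarly for the $\Theta_k$'s, yielding the required decomposition. Hence $\lift{\aRel} \subseteq R$, which is what we needed.

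The only genuinely fiddly step is the second one: rewriting a linear combination of linear combinations as a single linear combination, and keeping the index sets and weights aligned between the $\Delta$ side and the $\Theta$ side. The key conceptual point that makes this smooth is that the formulation of the lemma allows repeated states among the $s_i$, so no coalescing or normalization is needed when merging two decompositions; each original index simply becomes a fresh index in the combined sum. Once that is clear the manipulation is routine.
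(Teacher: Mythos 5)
Your proof is correct and follows essentially the same route as the paper: the ``if'' direction is the same two-step application of the clauses of the lifting definition, and your ``only if'' argument (showing the decomposable pairs form a relation closed under both defining clauses, hence containing the smallest one) is just the closure-based phrasing of the paper's rule induction, with the same key step of flattening a linear combination of decompositions over a disjoint-union index set with weights $q_k\,p_i^{(k)}$. No gaps.
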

\begin{proof}
  ($\Leftarrow$) Suppose there is an index set $I$ such that (i)
  $\Delta = \sum_{i \in I} p_i \cdot \pdist{s_i}$, (ii) $\Theta = \sum_{i \in
    I} p_i \cdot \Theta_i$, and (iii) $s_i \aRel \Theta_i$ for each $i
  \in I$. By (iii) and the first rule in Definition~\ref{def:lift}, we
  have $\pdist{s_i} \lift{\aRel}\Theta_i$ for each $i\in I$. By the
  second rule in  Definition~\ref{def:lift} we  obtain that
$(\sum_{i\in I} p_i\cdot \pdist{s_i}) \lift{\aRel} (\sum_{i\in
  I}p_i\cdot\Theta_i)$, that is $\Delta\lift{\aRel}\Theta$.

($\Rightarrow$)  We proceed by rule induction.
\begin{itemize}
\item If $\Delta \lift{\aRel} \Theta$ because of $\Delta=\pdist{s}$
  and $s\aRel\Theta$, then we can simply take $I$ to be the singleton
  set $\sset{i}$ with
  $p_i=1$ and $\Theta_i=\Theta$.

\item If $\Delta \lift{\aRel} \Theta$ because of the conditions
  $\Delta=\sum_{i\in I}p_i\cdot\Delta_i$, $\Theta_i=\sum_{i\in
    I}p_i\cdot\Theta_i$ for some index set $I$, and $\Delta_i
  \lift{\aRel} \Theta_i$ for each $i\in I$, then by induction
  hypothesis there are index sets $J_i$ such that $\Delta_i=\sum_{j\in J_i}p_{ij}\cdot \pdist{s_{ij}}$,\
  $\Theta_i=\sum_{j\in J_i}p_{ij}\cdot \Theta_{ij}$, and
  $s_{ij}\aRel\Theta_{ij}$ for each $i\in I$ and $j\in J_i$. It follows that 
$\Delta=\sum_{i\in I}\sum_{j\in J_i}p_ip_{ij}\cdot \pdist{s_{ij}}$,
$\Theta=\sum_{i\in I}\sum_{j\in J_i}p_ip_{ij}\cdot \Theta_{ij}$, and
$s_{ij}\aRel\Theta_{ij}$ for each $i\in I$ and $j\in J_i$. So it
suffices to take $\sset{ij \mid i\in I,j\in J_i}$ to be the index set
and $\sset{p_ip_{ij} \mid i\in I, j\in J_i}$ be the collection of probabilities.
\end{itemize}
\end{proof}

We apply this operation to the relations  $\ar{\alpha}$ in the pLTS
for $\alpha\in \Act_{\tau}$, where we also write $\ar{\alpha}$ for
$\lift{(\ar{\alpha})}$. Thus as source of
a relation $\ar{\alpha}$ we now also allow distributions.
But note that  $\pdist{s} \ar{\alpha} \Delta$ is more general than 
$s \ar{\alpha} \Delta$. In papers such as \cite{segalaL95,DP07} the
former is refered to as a \emph{combined transition} because if
$\pdist{s}\ar{\alpha}\Delta$ then there is a collection of distributions
$\Delta_i$ and probabilities $p_i$ such that $s\ar{\alpha}\Delta_i$ for each $i\in I$ and
$\Delta=\sum_{i\in I}p_i\cdot\Delta_i$ with $\sum_{i\in I}p_i=1$.

In Definition~\ref{def:lift}, linearity tells us how to compare two
linear combinations of distributions. Sometimes we need a dual
notion of decomposition. Intuitively, if a relation $\aRel$ is
\emph{left-decomposable} and $\Delta \aRel \Theta$, then for any
decomposition of $\Delta$ there exists some corresponding
decomposition of $\Theta$.
\begin{definition}[Left-decomposable]\label{def:left-deconst}
 A binary relation over distributions,  $\mathord{\aRel} \subseteq
  \dist{S} \times\dist{S}$,
 is called \emph{ left-decomposable } if
 $(\sum_{i\in I}p_i\cdot\Delta_i)   \aRel \Theta$,
 where $I$ is a finite index set,
implies that
   $\Theta$ can be written as $(\sum_{i\in I}p_i\cdot\Theta_i)$
    such that $\Delta_i \aRel \Theta_i$ for every $i \in I$.
\end{definition}

\begin{proposition}\label{prop:act.dstruct}
  For any $\aRel \;\subseteq\;S \times \dist{S}$ the relation 
$\lift{\aRel}$ over distributions is left-decomposable. 
\end{proposition}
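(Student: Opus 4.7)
The plan is to reduce the claim to the alternative characterisation in Lemma~\ref{lem:lifting.alt} and then redistribute the weights by hand. Suppose $\Delta = \sum_{i\in I} p_i\cdot\Delta_i$ and $\Delta \lift{\aRel} \Theta$. By Lemma~\ref{lem:lifting.alt} there exist a finite index set $J$, weights $q_j\ge 0$ with $\sum_j q_j = 1$, states $s_j$, and distributions $\Theta_j$ with $s_j \aRel \Theta_j$, such that $\Delta = \sum_{j\in J} q_j\cdot\pdist{s_j}$ and $\Theta = \sum_{j\in J} q_j\cdot\Theta_j$. The goal is to exhibit, for each $i$, a distribution $\Theta_i$ with $\Delta_i \lift{\aRel} \Theta_i$ and $\Theta = \sum_{i\in I} p_i\cdot\Theta_i$.

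The key construction is a double-indexed weight that splits each $q_j$ among the components $\Delta_i$ in proportion to their contributions at $s_j$. For indices with $p_i>0$ and $q_j>0$ (so that $\Delta(s_j)>0$), set
$$r_{ij} \;=\; q_j\cdot\frac{p_i\,\Delta_i(s_j)}{\Delta(s_j)},$$
and define $\Theta_i = \sum_{j\in J}(r_{ij}/p_i)\cdot\Theta_j$. For indices with $p_i = 0$, $\Theta_i$ may be chosen arbitrarily since it is multiplied by $0$ in the reconstruction. Using the identity $\sum_{j:s_j=s} q_j = \Delta(s)$, one checks column-wise that $\sum_i r_{ij} = q_j$ and row-wise that $\sum_j r_{ij} = p_i$.

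From $\sum_i r_{ij} = q_j$ we immediately obtain $\sum_{i\in I} p_i\cdot\Theta_i = \sum_j q_j\cdot\Theta_j = \Theta$. To show $\Delta_i \lift{\aRel}\Theta_i$, I invoke Lemma~\ref{lem:lifting.alt} in the opposite direction, using $J$ as the index set and $(r_{ij}/p_i)_{j\in J}$ as the probability vector (which sums to $1$ since $\sum_j r_{ij} = p_i$). Condition (iii) of the lemma holds because $s_j\aRel\Theta_j$ was part of the given decomposition; condition (ii) is the definition of $\Theta_i$; and condition (i), namely $\Delta_i = \sum_j (r_{ij}/p_i)\cdot\pdist{s_j}$, follows by evaluating at each state $s$ and again using $\sum_{j:s_j=s} q_j = \Delta(s)$.

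The only genuine obstacle is the bookkeeping around possibly zero weights: one must verify that $\Delta(s_j)>0$ whenever $q_j>0$ so that division is legal, and treat the $p_i = 0$ terms separately so that $\Theta_i$ remains a well-defined distribution. These are routine but require care to state cleanly. An alternative would be to proceed by rule induction on Definition~\ref{def:lift}, but the inductive case for linearity essentially forces the same redistribution argument, so routing through Lemma~\ref{lem:lifting.alt} seems more economical.
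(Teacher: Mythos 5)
Your proof is correct and follows essentially the same route as the paper's: both pass through Lemma~\ref{lem:lifting.alt} and redistribute the weights $q_j$ among the components $\Delta_i$ in proportion to $p_i\,\Delta_i(s_j)/\Delta(s_j)$, so your $\Theta_i$ coincides with the paper's (which writes the same quantity grouped over $s\in\support{\Delta_i}$ rather than via an explicit matrix $r_{ij}$). Your treatment of the zero-weight corner cases is if anything slightly more careful than the paper's, but the underlying argument is identical.
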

\begin{proof}
Suppose
$ \Delta = (\sum_{i\in I}p_i\cdot\Delta_i)$ and $\Delta    \;\lift{\aRel}\; \Theta$.
We have to find a family of $\Theta_i$ such that
\begin{enumerate}[(i)]
\item $\Delta_i \lift{\aRel} \Theta_i$ for each $i\in I$,

\item $\Theta  = \sum_{i\in I}p_i\cdot\Theta_i$.
\end{enumerate}
From the alternative characterisation of lifting,
Lemma~\ref{lem:lifting.alt}, we know that
\[
  \Delta = \sum_{j \in J} q_j \cdot \pdist{s_j}  \qquad s_j \aRel \Theta^j  \qquad  \Theta = \sum_{j \in J} q_j \cdot \Theta^j
\]
Define $\Theta_i$ to be 
$$
\sum_{s \in \support{\Delta_i}} \Delta_i(s) \cdot (\sum_{\setof{j \in J}{s=s_j}} \frac{q_j}{\Delta(s)} \cdot \Theta^j)
$$
Note that $\Delta(s)$ can be written as $\sum_{\setof{j \in J}{s=s_j}} q_j$ and therefore 
$$
\Delta_i  = \sum_{s \in \support{\Delta_i}} \Delta_i(s) 
            \cdot (\sum_{\setof{j \in J}{s=s_j}} \frac{q_j}{\Delta(s)} \cdot\pdist{s_j})
$$
Since $s_j \aRel \Theta_j$ this establishes (i) above. 

To establish (ii) above let us first abbreviate the sum 
$\sum_{\setof{j \in J}{s=s_j}} \frac{q_j}{\Delta(s)} \cdot \Theta^j$ to $X(s)$. 
Then $\sum_{i\in I} p_i \cdot \Theta_i$ can be written as 
\[\begin{array}{rl}
  & \sum_{s \in\support{\Delta}} \sum_{i \in I} p_i \cdot \Delta_i(s) \cdot X(s)\\
=  & \sum_{s \in\support{\Delta}} (\sum_{i \in I} p_i \cdot \Delta_i(s) ) \cdot X(s)\\
=  &  \sum_{s \in\support{\Delta}} \Delta(s) \cdot X(s)
\end{array}\]
The last equation is justified by the fact that $\Delta(s) = \sum_{i \in I} p_i \cdot \Delta_i(s)$.

Now $\Delta(s) \cdot X(s) = \sum_{\setof{j \in J}{s=s_j}} q_j \cdot \Theta^j$ and therefore we have
\[\begin{array}{rl}
  \sum_{i\in I} p_i \cdot \Theta_i &= 
   \sum_{s \in\support{\Delta}}  \sum_{\setof{j \in J}{s=s_j}} q_j \cdot \Theta^j \\
             &=\sum_{j \in J} q_j \cdot \Theta^j\\
             &= \Theta
\end{array}\]
\end{proof}

\noindent


We write $s \ar{\hat{\tau}} \Delta$ if either $s \ar{\tau}
\Delta$ or $\Delta = \pdist{s}$, and  $s \ar{\hat{a}} \Delta$ iff
$s\ar{a}\Delta$ for $a\in\Act$. For any $a\in\Act_\tau$, we know
that $\mathord{\ar{\hat{a}}}\subseteq S\times\dist{S}$, so we can
lift it to be a transition relation between distributions. With a
slight abuse of notation we simply write $\Delta\ar{\hat{a}}\Theta$
for $\Delta\lift{(\ar{\hat{a}})}\Theta$. Then we define weak
transitions $\dar{\hat{a}}$ by letting $\dar{\hat{\tau}}$ be the
reflexive and transitive closure of $\ar{\hat{\tau}}$ and writing
$\Delta\dar{\hat{a}} \Theta$ for $a \in\Act$ whenever $\Delta
\dar{\hat{\tau}} \ar{\hat{a}} \dar{\hat{\tau}} \Theta$.
If $\Delta$ is a point distribution, we often write 
$s\dar{\hat{a}}\Theta$ instead of $\pdist{s}\dar{\hat{a}}\Theta$.

\begin{proposition}\label{prop:linear.move}
 The action relations $\dar{\hat{\alpha}}$ are both linear and 
left-decomposable.
\end{proposition}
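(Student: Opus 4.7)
The plan is to handle the $\hat{\tau}$ case first and reduce the general $\hat{a}$ case to it via the defining decomposition $\Delta \dar{\hat{a}} \Theta$ iff $\Delta \dar{\hat{\tau}} \ar{\hat{a}} \dar{\hat{\tau}} \Theta$. As a preliminary I would record that $\ar{\hat{\tau}}$ is reflexive on distributions: from $s \ar{\hat{\tau}} \pdist{s}$ and the linearity rule in Definition~\ref{def:lift} one concludes $\Delta \ar{\hat{\tau}} \Delta$ for every $\Delta$, by writing $\Delta = \sum_{s \in \support{\Delta}} \Delta(s) \cdot \pdist{s}$. The single-step lifted relations $\ar{\hat{\tau}}$ and $\ar{\hat{a}}$ are already linear by construction, and left-decomposable by Proposition~\ref{prop:act.dstruct}; these form the building blocks of the argument.

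For linearity of $\dar{\hat{\tau}}$, suppose $\Delta_i \dar{\hat{\tau}} \Theta_i$ for each $i$ in a finite index set $I$, witnessed by chains $\Delta_i = \Delta_i^0 \ar{\hat{\tau}} \cdots \ar{\hat{\tau}} \Delta_i^{n_i} = \Theta_i$. I would pad each chain using the reflexivity fact above so that they share a common length $n = \max_{i\in I} n_i$; step-wise linearity of $\ar{\hat{\tau}}$ then gives $\sum_i p_i \cdot \Delta_i^k \ar{\hat{\tau}} \sum_i p_i \cdot \Delta_i^{k+1}$ for each $k < n$, and concatenating these $n$ transitions yields $\sum_i p_i \cdot \Delta_i \dar{\hat{\tau}} \sum_i p_i \cdot \Theta_i$. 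Linearity of $\dar{\hat{a}}$ then follows by splitting each $\Delta_i \dar{\hat{a}} \Theta_i$ as $\Delta_i \dar{\hat{\tau}} \Delta_i' \ar{\hat{a}} \Delta_i'' \dar{\hat{\tau}} \Theta_i$ and applying linearity to each of the three stages.

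For left-decomposability of $\dar{\hat{\tau}}$, assume $\sum_i p_i \cdot \Delta_i \dar{\hat{\tau}} \Theta$ via an $n$-step chain and induct on $n$. The base case $n=0$ takes $\Theta_i := \Delta_i$. For the inductive step, write $\sum_i p_i \cdot \Delta_i \ar{\hat{\tau}} \Delta' \dar{\hat{\tau}} \Theta$; Proposition~\ref{prop:act.dstruct} supplies a decomposition $\Delta' = \sum_i p_i \cdot \Delta_i'$ with $\Delta_i \ar{\hat{\tau}} \Delta_i'$, after which the induction hypothesis gives $\Theta = \sum_i p_i \cdot \Theta_i$ with $\Delta_i' \dar{\hat{\tau}} \Theta_i$, and the pieces compose to $\Delta_i \dar{\hat{\tau}} \Theta_i$. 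For $\dar{\hat{a}}$ I would apply this result twice (to the two outer $\dar{\hat{\tau}}$ legs of the three-stage decomposition) and Proposition~\ref{prop:act.dstruct} once (to the middle $\ar{\hat{a}}$ step), obtaining $\Theta_i$ with $\Delta_i \dar{\hat{\tau}} \Delta_i' \ar{\hat{a}} \Delta_i'' \dar{\hat{\tau}} \Theta_i$ and $\Theta = \sum_i p_i \cdot \Theta_i$, as required.

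The only subtle point is the padding step in the linearity argument, which is precisely why the reflexive clause $\Delta = \pdist{s}$ is built into the definition of $\ar{\hat{\tau}}$: without it the chains corresponding to different $i$ might have incompatible lengths and the step-by-step argument would fail. Everything else reduces in a routine fashion to one application per stage of either Definition~\ref{def:lift} or Proposition~\ref{prop:act.dstruct}.
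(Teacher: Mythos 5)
Your proof is correct and follows essentially the same route as the paper's, which simply observes that linearity and left-decomposability are preserved under composition and that $\dar{\hat{\alpha}}$ is built by composing $\ar{\hat{\tau}}$ and $\ar{\hat{\alpha}}$, both of which have these properties by Definition~\ref{def:lift}(ii) and Proposition~\ref{prop:act.dstruct}. Your version usefully spells out the one point the paper's two-line argument leaves implicit, namely that linearity of the reflexive--transitive closure requires padding chains of different lengths via the reflexivity of $\ar{\hat{\tau}}$.
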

\begin{proof}
  It is easy to check that both properties are preserved by composition; that is if $\aRel_i, i = 1,2$, 
  are linear, left-decomposable respectively, then so is $\aRel_1 \cdot \aRel_2$. 
The result now follows since $\dar{\hat{\alpha}}$ is formed by repeated composition from two relations $\ar{\hat{\tau}}$ and $\ar{\hat{\alpha}}$ which we know
are both linear and left-decomposable. 
\end{proof}
 
Let $\aRel\ \subseteq S \times S$ be a relation between states. It induces a speical relation $\hat{\aRel}\subseteq S\times\dist{S}$ between states and distributions:
\[\hat{\aRel} \Defs \sset{(s,\pdist{t})\mid s\aRel t}.\]
Then we can use Definition~\ref{def:lift} to lift $\hat{\aRel}$ to be a relation $\lift{(\hat{\aRel})}$ between distributions. For simplicity, we combine the above two lifting operations and directly write $\lift{\aRel}$ for $\lift{(\hat{\aRel})}$ in the sequel, with the intention that a relation between states can be lifted to a relation between distributions via a special application of Definition~\ref{def:lift}.
Consequently, we have the following corollary of
Lemma~\ref{lem:lifting.alt}.
\begin{corollary}\label{cor:lifting.states}
Suppose $\aRel\ \subseteq S\times S$. Then
  $\Delta \lift{\aRel} \Theta$ if and only if there is a finite index set $I$ such that
    \begin{enumerate}[(i)]
    \item $\Delta = \sum_{i \in I} p_i \cdot \pdist{s_i}$,
    \item $\Theta = \sum_{i \in I} p_i \cdot \pdist{t_i}$,
    \item $s_i \aRel t_i$ for each $i \in I$.
    \end{enumerate}
\end{corollary}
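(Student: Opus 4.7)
The plan is to derive the corollary as an immediate consequence of Lemma~\ref{lem:lifting.alt} applied to the state-to-distribution relation $\hat{\aRel}$, exploiting the very special shape of $\hat{\aRel}$: every pair in $\hat{\aRel}$ has a point distribution on the right.

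First I would instantiate Lemma~\ref{lem:lifting.alt} with $\hat{\aRel}$ in place of $\aRel$. This gives that $\Delta \lift{\hat{\aRel}} \Theta$ holds iff there is a finite index set $I$, probabilities $p_i$, states $s_i$ and \emph{distributions} $\Theta_i$ such that $\Delta = \sum_{i \in I} p_i \cdot \pdist{s_i}$, $\Theta = \sum_{i \in I} p_i \cdot \Theta_i$, and $s_i \mathrel{\hat{\aRel}} \Theta_i$ for each $i \in I$. Since by definition of $\hat{\aRel}$ the relation $s_i \mathrel{\hat{\aRel}} \Theta_i$ holds iff $\Theta_i = \pdist{t_i}$ for some $t_i$ with $s_i \aRel t_i$, the $\Theta_i$ are forced to be point distributions, and substituting $\Theta_i = \pdist{t_i}$ into the decomposition of $\Theta$ yields exactly the three conditions of the corollary.

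For the converse direction I would reverse the same chain: starting from decompositions $\Delta = \sum_i p_i \cdot \pdist{s_i}$, $\Theta = \sum_i p_i \cdot \pdist{t_i}$ with $s_i \aRel t_i$, set $\Theta_i := \pdist{t_i}$ so that $s_i \mathrel{\hat{\aRel}} \Theta_i$ by definition of $\hat{\aRel}$, and apply the $(\Leftarrow)$ direction of Lemma~\ref{lem:lifting.alt} to conclude $\Delta \lift{\hat{\aRel}} \Theta$, which by the notational convention introduced just before the corollary is the same as $\Delta \lift{\aRel} \Theta$.

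There is essentially no obstacle here; the only thing worth being careful about is the notational overloading whereby $\lift{\aRel}$ on distributions really denotes $\lift{(\hat{\aRel})}$, so I would state this unpacking explicitly at the beginning of the proof to avoid any confusion. The corollary is really a sanity-check reformulation of Lemma~\ref{lem:lifting.alt} for the special case when the underlying relation is lifted from a state-to-state relation, and the proof is a one-line appeal to that lemma once the definition of $\hat{\aRel}$ is unfolded.
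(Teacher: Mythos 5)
Your proposal is correct and coincides with the paper's intended derivation: the paper gives no separate proof, presenting the statement as an immediate corollary of Lemma~\ref{lem:lifting.alt} via exactly the unfolding of $\lift{\aRel}$ as $\lift{(\hat{\aRel})}$ and the observation that $s_i \mathrel{\hat{\aRel}} \Theta_i$ forces $\Theta_i$ to be a point distribution $\pdist{t_i}$ with $s_i \aRel t_i$. Your explicit treatment of both directions and of the notational overloading is a faithful (if slightly more verbose) rendering of the same argument.
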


Relations over distributions obtained by \emph{lifting} enjoy some very useful
properties. 
The following one will be used in Section~\ref{sec:scs} to show the transitivity of open bisimilarity.
\begin{proposition}\label{prop:lift2}
Let $\aRel_1,\aRel_2\subseteq S\times S$ be two binary relations. The forward relation $\lift{(\aRel_1\cdot\aRel_2)}$ coincides with $\lift{\aRel_1}\cdot\lift{\aRel_2}$.
\end{proposition}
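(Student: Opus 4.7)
The plan is to prove the two set inclusions separately, in both cases exploiting the characterisation of lifting for state relations given by Corollary~\ref{cor:lifting.states} together with the left-decomposability property of Proposition~\ref{prop:act.dstruct}.

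For the inclusion $\lift{(\aRel_1\cdot\aRel_2)} \subseteq \lift{\aRel_1}\cdot\lift{\aRel_2}$, I would start from $\Delta \lift{(\aRel_1\cdot\aRel_2)} \Theta$ and apply Corollary~\ref{cor:lifting.states} to obtain a finite index set $I$, probabilities $p_i$, and states $s_i, t_i$ with $\Delta = \sum_{i\in I} p_i\cdot\pdist{s_i}$, $\Theta = \sum_{i\in I} p_i\cdot\pdist{t_i}$, and $s_i(\aRel_1\cdot\aRel_2)t_i$ for each $i\in I$. By definition of relational composition, for every $i\in I$ there exists some $u_i\in S$ with $s_i\aRel_1 u_i$ and $u_i\aRel_2 t_i$. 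Setting $\Lambda = \sum_{i\in I} p_i\cdot\pdist{u_i}$ and applying Corollary~\ref{cor:lifting.states} in the other direction then gives $\Delta\lift{\aRel_1}\Lambda$ and $\Lambda\lift{\aRel_2}\Theta$, as required.

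The reverse inclusion $\lift{\aRel_1}\cdot\lift{\aRel_2} \subseteq \lift{(\aRel_1\cdot\aRel_2)}$ is the more delicate step and will be the main obstacle. Suppose $\Delta\lift{\aRel_1}\Lambda$ and $\Lambda\lift{\aRel_2}\Theta$. Using Corollary~\ref{cor:lifting.states} on the first transition, write $\Delta = \sum_{i\in I} p_i\cdot\pdist{s_i}$ and $\Lambda = \sum_{i\in I} p_i\cdot\pdist{u_i}$ with $s_i\aRel_1 u_i$. The subtlety is that the decomposition of $\Lambda$ here need not agree with the one induced by $\Lambda\lift{\aRel_2}\Theta$, so I cannot simply read off matching $t_i$'s. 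Instead, I would invoke left-decomposability (Proposition~\ref{prop:act.dstruct}) applied to $\Lambda = \sum_{i\in I} p_i\cdot\pdist{u_i}$ and $\Lambda\lift{\aRel_2}\Theta$, yielding distributions $\Theta_i$ with $\Theta = \sum_{i\in I} p_i\cdot\Theta_i$ and $\pdist{u_i}\lift{\aRel_2}\Theta_i$ for every $i\in I$. Expanding each $\pdist{u_i}\lift{\aRel_2}\Theta_i$ via Corollary~\ref{cor:lifting.states} gives $\Theta_i = \sum_{j\in J_i} q_{ij}\cdot\pdist{v_{ij}}$ with $u_i\aRel_2 v_{ij}$, hence $s_i(\aRel_1\cdot\aRel_2)v_{ij}$.

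Finally, I would assemble the double sum: $\Delta = \sum_{i\in I}\sum_{j\in J_i} p_i q_{ij}\cdot\pdist{s_i}$ and $\Theta = \sum_{i\in I}\sum_{j\in J_i} p_i q_{ij}\cdot\pdist{v_{ij}}$ with $s_i(\aRel_1\cdot\aRel_2) v_{ij}$, so one further application of Corollary~\ref{cor:lifting.states} delivers $\Delta\lift{(\aRel_1\cdot\aRel_2)}\Theta$. The only care needed is the bookkeeping for the point-distribution expansion of $\pdist{s_i}$ as $\sum_{j\in J_i} q_{ij}\cdot\pdist{s_i}$, which is immediate from $\sum_{j\in J_i} q_{ij} = 1$.
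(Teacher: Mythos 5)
Your proposal is correct and follows essentially the same route as the paper's own proof: both directions use the decomposition of Corollary~\ref{cor:lifting.states}, and the reverse inclusion hinges on exactly the same application of left-decomposability (Proposition~\ref{prop:act.dstruct}) to reconcile the two decompositions of the intermediate distribution before reassembling the double sum. No gaps.
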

\begin{proof}
We first show that $\lift{(\aRel_1\cdot\aRel_2)} ~\subseteq~
\lift{\aRel_1}\cdot\lift{\aRel_2}$. Suppose there are two distributions
$\Delta_1,\Delta_2$ such that
$\Delta_1\lift{(\aRel_1\cdot\aRel_2)}\Delta_2$. Then we have that
 \begin{equation}\label{eq:tran}
    \Delta_1 = \sum_{i \in I}{p_i \cdot \pdist{s_i}},
     \qquad
    s_i\ \aRel_1\cdot\aRel_2\ s'_i,
     \qquad
   \Delta_2 = \sum_{i \in I}{ p_i \cdot \pdist{s'_i}} ~.
  \end{equation}
The middle part of (\ref{eq:tran}) implies the existence of some
states $t_i$ such that $s_i \aRel_1 t_i$ and $t_i\aRel s'_i$. Let
$\Theta$ be the distribution $\sum_{i\in I}{p_i \cdot \pdist{t_i}}$.
It is clear that $\Delta_1\lift{\aRel_1}\Theta$ and
$\Theta\lift{\aRel_2}\Delta_2$. It follows that
$\Delta_1\lift{\aRel_1}\cdot\lift{\aRel_2}\Delta_2$.

Then we show the inverse inclusion $\lift{\aRel_1}\cdot\lift{\aRel_2}
~\subseteq~ \lift{(\aRel_1\cdot\aRel_2)}$.
 Given three distributions $\Delta_1,\Delta_2,\Delta_3$, we show that if $\Delta_1 \lift{\aRel_1}
  \Delta_2$ and
  $\Delta_2 \lift{\aRel_2} \Delta_3$ then $\Delta_1 \lift{(\aRel_1\cdot\aRel_2)} \Delta_3$.

  First $\Delta_1 \lift{\aRel_1} \Delta_2$ means that
  \begin{equation}\label{eq:dec1}
    \Delta_1 = \sum_{i \in I}{p_i \cdot \pdist{s_i}},
     \qquad
    s_i\ \aRel_1\ s'_i,
     \qquad
   \Delta_2 = \sum_{i \in I}{ p_i \cdot \pdist{s'_i}}.
  \end{equation}
Then from $\Delta_2 \lift{\aRel_2} \Delta_3$ and Proposition~\ref{prop:act.dstruct},
we have $\Delta_3 = \sum_{i \in I}{ p_i \cdot \Theta_i}$ with $\pdist{s'_i}\lift{\aRel_2} \Theta_i$ for each $i\in I$.
Now by Corollary~\ref{cor:lifting.states}, $\Theta_i$ can be further decomposed as $\Theta_i=\sum_{j\in J_i}q_{ij}\cdot \pdist{t_{ij}}$ such that
$s'_i\aRel_2 t_{ij}$ for each $j\in J_i$. In summary, we have
  \begin{equation}\label{eq:dec2}
    \Delta_1 = \sum_{i \in I}{p_i \cdot \sum_{j\in J_i} q_{ij}\cdot\pdist{s_i}},
     \qquad
    \mbox{ and }
     \qquad
   \Delta_3 = \sum_{i \in I}{p_i \cdot \sum_{j\in J_i} q_{ij}\cdot\pdist{t_{ij}}}.
  \end{equation}
Finally, it follows from (\ref{eq:dec2}) and the fact $s_i \aRel_1 s'_i\aRel_2 t_{ij}$ that $\Delta_1 \lift{(\aRel_1\cdot\aRel_2)} \Delta_3$.
\end{proof}

\section{Quantum CCS}\label{sec:qccs}

We introduce the language qCCS which was originally studied in \cite{FDJY07, YFDJ09, FDY11}. Three types of data are considered in qCCS: as classical data we have \texttt{Bool} for booleans and \texttt{Real} for real numbers, and as quantum data we have \texttt{Qbt} for qubits. Consequently,
two countably infinite sets of variables are assumed: $cVar$ for classical variables, ranged over by $x,y,...$, and $qVar$ for quantum variables, ranged over by $q,r,...$. 
We assume a set $Exp$, which includes $cVar$ as a subset and is ranged over by $e,e',\dots$,  of classical data expressions over
\texttt{Real}, and a set of boolean-valued expressions $BExp$, ranged over by $b, b',\dots$, with the usual  boolean constants $\texttt{true}$, $\texttt{false}$, and operators
$\neg$, $\wedge$, $\vee$, and $\ra$. In particular, we let $e\bowtie e'$ be a boolean expression for any $e,e'\in Exp$ and $\bowtie \in\sset{>, <, \geq, \leq, =}$.
We further assume that only classical variables can occur free in both data expressions and boolean expressions.
Two types of channels are used: $cChan$ for classical channels, ranged over by $c,d,...$, and $qChan$ for quantum channels, ranged over by \texttt{c,d},.... A relabelling function $f$ is a map on $cChan\; \cup\; qChan$ such that $f(cChan)\subseteq cChan$ and $f(qChan)\subseteq qChan$.
Sometimes we abbreviate a sequence of distinct variables $q_1,...,q_n$ into $\tilde{q}$. 

The terms in qCCS are given by:
\[\begin{array}{rcl}
  P,Q &::=& \Cnil \BNFsep \tau.P \BNFsep c?x.P \BNFsep c!e.P
\BNFsep \qc?q.P \BNFsep \qc!q.P \BNFsep \CE[\tilde{q}].P
\BNFsep M[\tilde{q};x].P \\
& & \BNFsep P + Q \BNFsep \;P \Cpar Q\;
\BNFsep P[f] \BNFsep P\backslash L \BNFsep \ifthen{b}{P}
                    \BNFsep A(\tilde{q};\tilde{x})  
\end{array}\]
where $f$ is a relabelling function and $L\subseteq cChan\cup qChan$ is a set of channels.
Most of the constructors are standard as in CCS \cite{ccs}. 
We briefly explain a few new constructors. The process $\qc?q.P$ receives a quantum datum along quantum channel $\qc$ and evolves into $P$. The process $\qc!q.P$ sends out a quantum datum along quantum channel $\qc$ before evolving into $P$. The new symbols $\CE$ and $M$ represent respectively a trace-preserving super-operator and a non-degenerate projective measurement applying on the Hilbert space associated with the systems $\tilde{q}$. 

Free classical variables can be defined in the usual way, except for the fact that the variable $x$ in the quantum measurement $M[\tilde{q};x]$ is bound. A process $P$ is closed if it contains no free classical variable, i.e. $fv(P)=\emptyset$.

The set of free quantum variables for process $P$, denoted by $qv(P)$ can be inductively defined as follows.
\[\begin{array}{rclrcl}
qv(\Cnil) & = & \emptyset & qv(\tau.P) & = & qv(P)\\
qv(c?x.P) & = & qv(P) & qv(c!e.P) & = & qv(P) \\
qv(\qc?q.P) & = & qv(P)-\sset{q} & qv(\qc!q.P) & = & qv(P)\cup\sset{q} \\
qv(\CE[\tilde{q}].P) & = & qv(P)\cup \tilde{q} &
qv(M[\tilde{q};x].P) & = & qv(P) \cup \tilde{q}\\
qv(P+Q) & = & qv(P)\cup qv(Q) \qquad &
qv(P \Cpar Q) & = & qv(P)\cup qv(Q)\\
qv(P[f]) & = & qv(P) &
qv(P\backslash L) & = & qv(P)\\
qv(\ifthen{b}{P}) & = & qv(P) &
qv(A(\tilde{q};\tilde{x})) & = & \tilde{q}.
\end{array}\]
For a process to be legal, we require that 
\begin{enumerate}
\item $q\not\in qv(P)$ in the process $\qc!q.P$;
\item $qv(P)\cap qv(Q)=\emptyset$ in the process $P \Cpar Q$;
\item Each constant $A(\tilde{q};\tilde{x})$ has a defining equation $A(\tilde{q};\tilde{x}) \Defs P$, where $P$ is a term with $qv(P)\subseteq\tilde{q}$ and $fv(P)\subseteq \tilde{x}$.
\end{enumerate}
The first condition says that a quantum system will not be referenced after it has been sent out. This is a requirement of quantum no-cloning theorem. The second condition says that parallel composition $\Cpar$ models separate parties that never reference a quantum system simultaneously.

Throughout the paper we implicitly assume the convention that processes are identified up to $\alpha$-conversion, bound variables differ from each other and they are different from free variables.

We now turn to the operational semantics of qCCS. For each quantum variable $q$ we assume a 2-dimensional Hilbert space $\CH_q$. For any nonempty subset $S\subseteq qVar$ we write $\CH_S$ for the tensor product space $\bigotimes_{q\in S}\CH_q$. In particular, $\CH=\CH_{\it qVar}$ is the state space of the whole environment consisting of all the quantum variables, which is a countably infinite dimensional Hilbert space.

Let $P$ be a closed quantum process and $\rho$ a density operator on $\CH$, the pair $\pair{P,\rho}$ is called a \emph{configuration}. We write $\Con$ for the set of all configurations, ranged over by $\CC$ and $\CD$.
We interpret qCCS as a pLTS whose states are all the configurations definable in the language,
and whose arrows are determined by the rules in Figure~\ref{fig:opsem}; we have omitted the obvious
symmetric counterparts to the rules \Rlts{C-Com}, \Rlts{Q-Com}, \Rlts{Int} and
\Rlts{Sum}. 
The set of actions $\Act$ takes the form
\[\sset{c?v, c!v \mid c\in cChan, v\in\texttt{Real}}\cup \sset{\qc?r,\qc!r \mid \qc\in qChan,r\in qVar}\]
The symbol $\tau$ denotes invisible actions. We write $\Act_\tau$ for $\Act\cup\sset{\tau}$, which is ranged over by $\alpha$. We use $cn(\alpha)$ for the set of channel names in action $\alpha$. So, for example, $cn(\qc?x)=\sset{\qc}$ and $cn(\tau)=\emptyset$.

In the first eight rules in Figure~\ref{fig:opsem}, the targets of arrows are point distributions, and we use the slightly abbreviated form $\CC\ar{\alpha}\CC'$ to mean $\CC\ar{\alpha}\pdist{\CC'}$.

The rules use the obvious extension of the function $\Cpar$ on terms to configurations and distributions. To be precise,
$\CC\Cpar P$ is the configuration $\<Q\Cpar P,\rho \>$ where $\CC=\<Q,\rho\>$, and
 $\Delta\Cpar P$ is the distribution defined by:
\[(\Delta\Cpar P)(\pair{Q,\rho}) = \left\{\begin{array}{ll}
\Delta(\pair{Q',\rho}) & \mbox{if $Q=Q'\Cpar P$}\\
0 & \mbox{otherwise.}
\end{array}\right.\]
Similar extension applies to $\Delta[f]$ and $\Delta\backslash L$.


\begin{figure}[t]

\[\begin{array}{ll}
  \slinfer[\Rlts{Tau}]{\pair{\tau.P, \rho} \ar{\tau} \pair{P, \rho}}
  &\linfer[\Rlts{C-Inp}]{v\in\texttt{Real}}{\pair{c?x.P,\rho} \ar{c?v} \pair{P[v/x],\rho}}
  \\[3pt]
  \linfer[\Rlts{C-Outp}]{v=\Op{e}}{\pair{c!e.P,\rho} \ar{c!v} \pair{P,\rho} }
                        
 & 
\linfer[\Rlts{C-Com}]{\pair{P_1,\rho}\ar{c?v}\pair{P'_1,\rho}\qquad
\pair{P_2,\rho}\ar{c!v}\pair{P'_2,\rho}
}
                          {\pair{P_1\Cpar P_2,\rho} \ar{\tau} \pair{P'_1\Cpar P'_2,\rho} }
 \\[3pt]
  \linfer[\Rlts{Q-inp}]{r\not\in qv(\qc?q.P)}
{\pair{\qc?q.P,\rho} \ar{\qc?r} \pair{P[r/q],\rho}}
&
\slinfer[\Rlts{Q-Outp}]{\pair{\qc!q.P,\rho} \ar{\qc!q} \pair{P,\rho} }
\\[3pt]
\linfer[\Rlts{Q-Com}]{\pair{P_1,\rho} \ar{\qc?r} \pair{P'_1,\rho} \qquad \pair{P_2,\rho}\ar{\qc!r} \pair{P'_2,\rho}}
                      {\pair{P_1\Cpar P_2, \rho } \ar{\tau} \pair{P'_1\Cpar P'_2, \rho}}
&
\slinfer[\Rlts{Oper}]{\pair{\CE[\tilde{q}].P,\rho} \ar{\tau} \pair{P,\CE_{\tilde{q}}(\rho)}}
\\[3pt]
\linfer[\Rlts{Meas}]{M=\sum_{i\in I}\lambda_i E^i \qquad p_i=tr(E^i_{\tilde{q}}\rho)}
{\pair{M[\tilde{q};x].P,\rho} \ar{\tau} \sum_{i\in I}p_i \pair{P[\lambda_i/x], E^i_{\tilde{q}}\rho E^i_{\tilde{q}} / p_i}}
&
\\
\linfer[\Rlts{Int}]{\pair{P_1,\rho} \ar{\alpha} \Delta\qquad qbv(\alpha)\cap qv(P_2)=\emptyset}{\pair{P_1 \Cpar P_2,\rho} \ar{\alpha} \Delta\Cpar P_2}
& 
\linfer[\Rlts{Sum}]{\pair{P_1,\rho} \ar{\alpha} \Delta}{\pair{P_1+P_2,\rho} \ar{\alpha} \Delta}
\\[3pt]
\linfer[\Rlts{Rel}]{\pair{P,\rho} \ar{\alpha} \Delta}{\pair{P[f],\rho} \ar{f(\alpha)} \Delta[f]}
&
\linfer[\Rlts{Res}]{\pair{P,\rho} \ar{\alpha} \Delta \qquad cn(\alpha)\cap L=\emptyset}{\pair{P\backslash L,\rho} \ar{\alpha} \Delta\backslash L}
\\[3pt]
\linfer[\Rlts{Cho}]{\pair{P,\rho} \ar{\alpha} \Delta \qquad \Op{b}=\texttt{true}}{\pair{\ifthen{b}{P},\rho} \ar{\alpha} \Delta}
&
\linfer[\Rlts{Cons}]{\pair{P[\widetilde{v}/\widetilde{x},\tilde{r}/\tilde{q}],\rho} \ar{\alpha} \Delta \qquad A(\widetilde{x}, \tilde{q})\Defs P}{\pair{A(\widetilde{v},\tilde{r}),\rho} \ar{\alpha} \Delta}\end{array}\]
 \caption{Operational semantics of qCCS\label{fig:opsem}
 }
 \rule{\linewidth}{0.5mm}
\end{figure}

\section{Open bisimulations}
\label{sec:scs}
Let $\CC=\pair{P,\rho}$. We use the notation $\qv(\CC)\Defs qv(P)$ for free quantum variables and $\ptr(\CC)\Defs\tr_{qv(P)}(\rho)$ for partial traces.
Let $\Delta=\sum_{i\in I}p_i\cdot\pdist{\pair{P_i,\rho_i}}$. We write $\CE(\Delta)$ for the distribution $\sum_{i\in I}p_i\cdot\pdist{\pair{P_i,\CE(\rho_i)}}$.


\begin{definition}\label{def:bisims}
A relation $\aRel\ \subseteq \Con\times\Con$ is a \emph{strong open simulation} if 
$\CC\aRel \CD$ implies that $\qv(\CC)=\qv(\CD)$, $\ptr(\CC)=\ptr(\CD)$, 
 and for any $\CE\in \so(\CH_{\overline{qv(\CC)}})$
\begin{itemize}
\item whenever $\CE(\CC)\ar{\alpha} \Delta$, there is some distribution $\Theta$ with $\CE(\CD)\ar{\alpha}\Theta$ and $\Delta \lift{\aRel} \Theta$. 
\end{itemize}
A relation $\aRel$ is a \emph{strong open bisimulation} if both $\aRel$ and
$\aRel^{-1}$ are strong open simulations. 

\end{definition}
The above definition is inspired by the work of Sangiorgi \cite{San96}, where a notion of bisimulation is defined for the $\pi$-calculus by treating name instantiation in an ``open'' style. Here we deal with super-operator application in an ``open'' style, but the instantiation of variables is in an ``early'' style because the operational semantics given in Figure~\ref{fig:opsem} is essentially an early semantics. For more variants of semantics, see e.g. \cite{pibook}.

In this paper we are mainly interested in a notion of weak open bisimulation which is like strong open bisimulation but internal transitions are abstracted away.


\begin{definition}\label{def:bisim}
A relation $\aRel\ \subseteq \Con\times\Con$ is a \emph{weak open simulation} if 
$\CC\aRel \CD$ implies that $\qv(\CC)=\qv(\CD)$, $\ptr(\CC)=\ptr(\CD)$, 
 and for any $\CE\in \so(\CH_{\overline{qv(\CC)}})$
\begin{itemize}
\item whenever $\CE(\CC)\ar{\alpha} \Delta$, there is some distribution $\Theta$ with $\CE(\CD)\dar{\hat{\alpha}}\Theta$ and $\Delta \lift{\aRel} \Theta$. 
\end{itemize}
A relation $\aRel$ is a \emph{weak open bisimulation} if both $\aRel$ and
$\aRel^{-1}$ are weak open simulations. We let $\obisi$ be the largest weak open bisimulation. In the sequel we will simply use open bisimulation to refer to weak open bisimulation.

Two quantum processes $P$ and $Q$ are bisimilar, denoted by $P\obisi
Q$, if for any quantum state $\rho\in\CD(\CH)$ and any indexed set
$\tilde{v}$ of classical values, we have
\[\pair{P\sset{\tilde{v}/\tilde{x}},\rho} \obisi
\pair{Q\sset{\tilde{v}/\tilde{x}},\rho}.\] Here $\tilde{x}$ is the set
of free classical variables contained in $P$ and $Q$.
\end{definition}

\subsection{A useful proof technique}
In Definition~\ref{def:bisim} super-operator application and
transitions are considered at the same time. In fact, we can separate
the two issues and approach the concept of open bisimulation in an
incremental way, which turns out to be very useful when proving that
two configurations are open bisimilar.
\begin{definition}
A relation $\aRel\subseteq \Con\times\Con$ is closed under super-operator application if $\CC\aRel\CD$ implies $\CE(\CC)\aRel \CE(\CD)$ for any $\CE\in \so(\CH_{\overline{qv(\CC)}})$.
\end{definition}

\begin{definition}
A relation $\aRel\subseteq \Con\times\Con$ is a \emph{ground simulation} if 
$\CC\aRel \CD$ implies that $\qv(\CC)=\qv(\CD)$, $\ptr(\CC)=\ptr(\CD)$, 
 and 
\begin{itemize}
\item whenever $\CC\ar{\alpha} \Delta$, there is some distribution $\Theta$ with $\CD\dar{\hat{\alpha}}\Theta$ and $\Delta \lift{\aRel} \Theta$. 
\end{itemize}
A relation $\aRel$ is a \emph{ground bisimulation} if both $\aRel$ and
$\aRel^{-1}$ are ground simulations.
\end{definition}

\begin{proposition}\label{prop:ground}
Suppose that a relation $\aRel$
\begin{enumerate}
\item is a  ground bisimulation, and
\item is closed under all super-operator application.
\end{enumerate}
Then $\aRel$ is an  open bisimulation.
\end{proposition}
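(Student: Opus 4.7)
The plan is to verify directly that $\aRel$ meets the three requirements of the weak open (bi)simulation definition, exploiting the two hypotheses in the obvious way. Concretely, fix $\CC\aRel\CD$. Two of the three clauses---namely $\qv(\CC)=\qv(\CD)$ and $\ptr(\CC)=\ptr(\CD)$---are delivered immediately by the ground bisimulation assumption applied to the pair $(\CC,\CD)$ itself, since a ground bisimulation demands these conditions for every pair it relates.

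For the transfer property, I would take an arbitrary $\CE\in\so(\CH_{\overline{\qv(\CC)}})$ together with a transition $\CE(\CC)\ar{\alpha}\Delta$. By hypothesis~(2), closure under super-operator application upgrades $\CC\aRel\CD$ to $\CE(\CC)\aRel\CE(\CD)$. Then hypothesis~(1) applied to this new pair supplies a $\Theta$ with $\CE(\CD)\dar{\hat\alpha}\Theta$ and $\Delta\lift{\aRel}\Theta$, which is exactly the condition a weak open simulation requires. This shows $\aRel$ is a weak open simulation.

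For the converse direction, I would remark that both hypotheses are self-dual under inverse: a ground bisimulation has $\aRel^{-1}$ also a ground bisimulation by definition, and closure of $\aRel$ under super-operator application transfers to $\aRel^{-1}$ since $\CE(\CC)\aRel\CE(\CD)$ is the same statement as $\CE(\CD)\aRel^{-1}\CE(\CC)$ (note that $\qv(\CC)=\qv(\CD)$ makes the condition ``$\CE\in\so(\CH_{\overline{\qv(\cdot)}})$'' symmetric in the two arguments). Thus the same argument applied to $\aRel^{-1}$ shows it is a weak open simulation, and so $\aRel$ is a weak open bisimulation.

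There is essentially no obstacle here: the proposition is a clean decomposition of the ``open'' condition into its two orthogonal ingredients (super-operator closure and the transition-matching condition), and the proof is just the composition of these ingredients. The only subtlety worth being pedantic about is confirming that when we apply the ground bisimulation clause to the pair $(\CE(\CC),\CE(\CD))$, the resulting $\Theta$ arises from a \emph{weak} transition $\dar{\hat\alpha}$ rather than a strong one---but that is already how ground bisimulation is stated in Definition~\ref{def:bisim}'s ground variant, so no further work is needed.
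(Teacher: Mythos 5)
Your proof is correct and follows essentially the same route as the paper's: use the ground bisimulation clause on $(\CC,\CD)$ for the $\qv$ and $\ptr$ conditions, use super-operator closure to get $\CE(\CC)\aRel\CE(\CD)$, and then apply the ground bisimulation clause to that pair to obtain the matching weak transition, with symmetry handling the converse direction. Your remarks on the self-duality under inverse are just a more explicit spelling-out of the paper's one-line ``similarly'' for the symmetric case.
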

\begin{proof}
Suppose that $\CC\aRel\CD$. Since $\aRel$ is a ground bisimulation, we
have $\qv(\CC)=\qv(\CD)$ and $\ptr(\CC)=\ptr(\CD)$. 
 Since $\aRel$ is closed under all super-operator application, we have $\CE(\CC) \aRel \CE(\CD)$ for any $\CE\in \so(\CH_{\overline{qv(\CC)}})$. If $\CE(\CC)\ar{\alpha}\Delta$, then there exists some distribution $\Theta$ such that $\CE(\CD)\dar{\hat{\alpha}}\Theta$ and $\Delta\lift{\aRel}\Theta$ because $\aRel$ is a ground bisimulation. Similarly, any transtion from $\CE(\CD)$ can also be matched up by $\CE(\CC)$. Therefore, $\aRel$ is an open bisimulation.
\end{proof}
The above proposition provides us a useful proof technique: in order to show that two configurations $\CC$ and $\CD$ are open bisimilar, it sufficies to exhibit a binary relation including the pair $(\CC,\CD)$, and then to check that the relation is a ground bisimulation and is closed under all super-operator application. This is analogous to a proof technique of open bisimulation for the $\pi$-calculus \cite{San96}, where name instantiation is playing the same role as super-operator application here.

\begin{proposition}\label{prop:obisi.so}
$\obisi$ is the largest  ground bisimulation that is closed under all super-operator application.
\end{proposition}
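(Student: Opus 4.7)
The proposition has two directions. For the ``largest'' direction, suppose $\aRel$ is any ground bisimulation closed under all super-operator applications; then Proposition~\ref{prop:ground} directly tells us that $\aRel$ is a (weak) open bisimulation, and since $\obisi$ is by definition the largest such, $\aRel\subseteq\obisi$. This direction is essentially a one-line consequence of what has already been proved.

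For the converse I need to show that $\obisi$ itself is both a ground bisimulation and closed under super-operator application. The ground-bisimulation property is obtained by specialising the open-bisimulation clause to the identity super-operator $\CI\in\so(\CH_{\overline{qv(\CC)}})$, which is trace-preserving (its single Kraus operator is $I$) and satisfies $\CI(\CC)=\CC$ and $\CI(\CD)=\CD$; the transition-matching obligation then reduces verbatim to the ground-simulation obligation. For closure under super-operator application the standard co-inductive move is to exhibit the candidate relation
\[
\aRel = \{(\CE(\CC),\CE(\CD)) \mid \CC\obisi\CD,\ \CE\in\so(\CH_{\overline{qv(\CC)}})\},
\]
which automatically contains $\obisi$ (take $\CE$ to be the identity), and to prove that $\aRel$ is itself an open bisimulation. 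The inclusion $\aRel\subseteq\obisi$ that follows then yields $\CE(\CC)\obisi\CE(\CD)$ for every $\CC\obisi\CD$ and every compatible $\CE$.

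To verify that $\aRel$ is an open bisimulation at a pair $(\CE(\CC),\CE(\CD))$: equality of free quantum variables is immediate since super-operator application does not alter $qv$; equality of partial traces rests on the identity $\tr_{qv(\CC)}(\CE(\rho))=\CE(\tr_{qv(\CC)}(\rho))$, valid for super-operators acting on the complementary subsystem, combined with $\ptr(\CC)=\ptr(\CD)$; and for transition matching, given any $\CE'\in\so(\CH_{\overline{qv(\CE(\CC))}})$ and any $\CE'(\CE(\CC))\ar{\alpha}\Delta$, the composition $\CE'\circ\CE$ is itself a trace-preserving super-operator on $\CH_{\overline{qv(\CC)}}$, so $\CC\obisi\CD$ supplies $\Theta$ with $\CE'(\CE(\CD))\dar{\hat{\alpha}}\Theta$ and $\Delta\lift{\obisi}\Theta$; since $\obisi\subseteq\aRel$ we have $\lift{\obisi}\subseteq\lift{\aRel}$, hence $\Delta\lift{\aRel}\Theta$. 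The symmetric direction is identical. The main technical point to be careful about is the partial-trace identity, which tacitly requires interpreting $\CE\in\so(\CH_{\overline{qv(\CC)}})$ as the trivially extended map $\CI_{qv(\CC)}\otimes\CE$ on all of $\CH$; once that convention is fixed, it is a standard consequence of Kraus representation and linearity of the trace, and everything else is routine bookkeeping.
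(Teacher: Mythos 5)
Your proof is correct and follows the same route as the paper, whose own proof simply asserts that $\obisi$ is ``by definition'' closed under super-operator application, ``obviously'' a ground bisimulation, and largest by Proposition~\ref{prop:ground}. The only difference is that you supply the details the paper elides---in particular the coinductive argument via the composed super-operator $\CE'\circ\CE$ and the partial-trace identity $\tr_{qv(\CC)}(\CE(\rho))=\CE(\tr_{qv(\CC)}(\rho))$, which show that closure under super-operator application, while not literally immediate from Definition~\ref{def:bisim}, does follow from it.
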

\begin{proof}
By definition $\obisi$ is closed under all super-operator application. It is is obviously a ground bisimulation. Moreover, it is the largest one because of Proposition~\ref{prop:ground}.
\end{proof}

\subsection{Equivalence and congruence}
As a sanity check, we prove that $\obisi$
is an equivalence relation. 
This is based on the following transfer property.

\begin{proposition}\label{prop:sbisi.transfer}
  Suppose $\Delta \lift{\obisi} \Theta$ and $\Delta \ar{\alpha} \Delta'$
in a pLTS.
  Then there exists some distribution $\Theta'$ such that
$\Theta \dar{\hat{\alpha}} \Theta'$ and $\Delta' \lift{\obisi} \Theta'$.
\end{proposition}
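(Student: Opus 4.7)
The plan is to unpack the three layers of lifting at work here---the state-level relation $\obisi$, its lift to distributions, and the combined transitions $\pdist{s} \ar{\alpha} \Delta$---and then reassemble the pieces by linearity. First I would apply Corollary~\ref{cor:lifting.states} to the hypothesis $\Delta \lift{\obisi} \Theta$ to obtain a finite index set $I$, probabilities $p_i$, and configurations $s_i, t_i$ with $\Delta = \sum_{i\in I} p_i \cdot \pdist{s_i}$, $\Theta = \sum_{i\in I} p_i \cdot \pdist{t_i}$, and $s_i \obisi t_i$ for every $i$.

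Next I would exploit the left-decomposability of $\ar{\alpha}$ guaranteed by Proposition~\ref{prop:act.dstruct}: since $\Delta = \sum_i p_i \cdot \pdist{s_i}$ and $\Delta \ar{\alpha} \Delta'$, there exist distributions $\Delta'_i$ with $\pdist{s_i} \ar{\alpha} \Delta'_i$ and $\Delta' = \sum_i p_i \cdot \Delta'_i$. Each $\pdist{s_i} \ar{\alpha} \Delta'_i$ is a combined transition, so by Lemma~\ref{lem:lifting.alt} we may write $\Delta'_i = \sum_{k \in K_i} r_{ik} \cdot \Delta_{ik}$ with $s_i \ar{\alpha} \Delta_{ik}$ for each $k$.

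Now I would use that $\obisi$ is a ground bisimulation (Proposition~\ref{prop:obisi.so}): for each basic transition $s_i \ar{\alpha} \Delta_{ik}$ there exists $\Theta_{ik}$ with $\pdist{t_i} \dar{\hat{\alpha}} \Theta_{ik}$ and $\Delta_{ik} \lift{\obisi} \Theta_{ik}$. Setting $\Theta'_i := \sum_{k \in K_i} r_{ik} \cdot \Theta_{ik}$, the linearity of $\dar{\hat{\alpha}}$ and of $\lift{\obisi}$ (both cases of Proposition~\ref{prop:linear.move}, together with clause~(ii) of Definition~\ref{def:lift}) yield $\pdist{t_i} \dar{\hat{\alpha}} \Theta'_i$ and $\Delta'_i \lift{\obisi} \Theta'_i$. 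Finally, letting $\Theta' := \sum_{i \in I} p_i \cdot \Theta'_i$, another application of linearity to the decompositions of $\Theta$ and $\Delta'$ delivers $\Theta \dar{\hat{\alpha}} \Theta'$ and $\Delta' \lift{\obisi} \Theta'$, as required.

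The only subtlety I anticipate is keeping the bookkeeping straight across the two successive decompositions of $\Delta$ (first via the $\obisi$-lifting, then via the left-decomposability of $\ar{\alpha}$), and making sure that the weak matching transitions $\pdist{t_i} \dar{\hat{\alpha}} \Theta_{ik}$ really do combine under linearity into a single weak transition from $\pdist{t_i}$---for this one exploits that $\sum_k r_{ik} \cdot \pdist{t_i} = \pdist{t_i}$, so Proposition~\ref{prop:linear.move} applies directly. Everything else is routine manipulation of indexed sums of distributions.
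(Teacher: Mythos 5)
Your proposal is correct and follows essentially the same route as the paper's proof: decompose $\Delta\lift{\obisi}\Theta$ via Corollary~\ref{cor:lifting.states}, split $\Delta'$ by left-decomposability (Proposition~\ref{prop:act.dstruct}), further decompose each combined transition via Lemma~\ref{lem:lifting.alt}, match each basic transition using the bisimulation property, and reassemble by linearity of $\dar{\hat{\alpha}}$ and $\lift{\obisi}$. Your closing remark about $\sum_k r_{ik}\cdot\pdist{t_i}=\pdist{t_i}$ makes explicit a step the paper leaves implicit, which is a nice touch.
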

\begin{proof}
  Suppose $\Delta \lift{\obisi} \Theta$ and $\Delta \ar{\alpha}
  \Delta'$. By Corollary~\ref{cor:lifting.states}, there is a finite index set $I$
  such that (i) $\Delta=\sum_{i\in I}p_i\cdot \pdist{\CC_i}$,
  (ii) $\Theta=\sum_{i\in I}p_i\cdot \pdist{\CD_i}$, and (iii) $\CC_i\obisi\CD_i$ for
  each $i\in I$. By the condition $\Delta\ar{\alpha}\Delta'$, (i) and Proposition~\ref{prop:act.dstruct}, we can decompose
  $\Delta'$ into $\sum_{i\in I}p_i\cdot\Delta'_i$ for some $\Delta'_i$
  such that $\pdist{\CC_i}\ar{\alpha}\Delta'_i$. By
  Lemma~\ref{lem:lifting.alt} again, for each $i\in I$, there is an
  index set $J_i$ such that $\Delta'_i=\sum_{j\in
    J_i}q_{ij}\cdot\Delta'_{ij}$ and $\CC_i\ar{\alpha}\Delta'_{ij}$ for
  each $j\in J_i$ and $\sum_{j\in J_i}q_{ij}=1$. By (iii) there is some $\Theta'_{ij}$ such that
  $\CD_i\dar{\hat{\alpha}}\Theta'_{ij}$ and
  $\Delta'_{ij}\lift{\obisi}\Theta'_{ij}$. Let $\Theta'=\sum_{i\in
    I,j\in J_i}p_iq_{ij}\cdot\Theta'_{ij}$. Since $\dar{\hat{\alpha}}$ is
  linear by Proposition~\ref{prop:linear.move}, we know that
  $\Theta=\sum_{i\in I}p_i\sum_{j\in J_i}q_{ij}\CD_i
  \dar{\hat{\alpha}}\Theta'$. By the linearity of $\lift{\obisi}$, we notice
  that $\Delta'=(\sum_{i\in I}p_i\sum_{j\in J_i}q_{ij}\cdot\Delta'_{ij})
  \lift{\obisi}\Theta'$.
\end{proof}

\begin{corollary}\label{cor:tran.prev}
Suppose 
$\Delta \lift{\obisi} \Theta$ and $\Delta \dar{\hat{\alpha}} \Delta'$.
Then there is some $\Theta'$ with
  $\Theta \dar{\hat{\alpha}} \Theta'$ and $\Delta' \lift{\obisi} \Theta'$.
\end{corollary}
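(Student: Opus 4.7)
My plan is to obtain the corollary from Proposition~\ref{prop:sbisi.transfer} by a straightforward induction on the length of the underlying $\tau$-chain, after first observing that the single-step transfer property already lifts smoothly from $\ar{\alpha}$ to $\ar{\hat{\alpha}}$. For $a \in \Act$ this is trivial, since on states $\ar{\hat{a}} = \ar{a}$. For $\alpha = \tau$, I would revisit the argument in Proposition~\ref{prop:sbisi.transfer}: when a constituent state $\CC_i$ in the decomposition of $\Delta$ performs the degenerate move $\pdist{\CC_i} \ar{\hat{\tau}} \pdist{\CC_i}$ rather than a genuine $\tau$-step, the matching $\CD_i$ on the other side can simply stay put, using reflexivity of $\dar{\hat{\tau}}$ and the fact that $\CC_i \obisi \CD_i$ entails $\pdist{\CC_i} \lift{\obisi} \pdist{\CD_i}$. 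This yields a single-step variant: whenever $\Delta \lift{\obisi} \Theta$ and $\Delta \ar{\hat{\tau}} \Delta'$, there exists $\Theta'$ with $\Theta \dar{\hat{\tau}} \Theta'$ and $\Delta' \lift{\obisi} \Theta'$.

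Next I would unfold the weak transition. By definition, $\Delta \dar{\hat{\tau}} \Delta'$ means there is a finite chain
\[\Delta = \Delta_0 \ar{\hat{\tau}} \Delta_1 \ar{\hat{\tau}} \cdots \ar{\hat{\tau}} \Delta_n = \Delta',\]
while for $a \in \Act$, $\Delta \dar{\hat{a}} \Delta'$ decomposes as $\Delta \dar{\hat{\tau}} \Xi_1 \ar{\hat{a}} \Xi_2 \dar{\hat{\tau}} \Delta'$. I would therefore prove the $\alpha = \tau$ case by induction on $n$: the base case $n = 0$ is immediate with $\Theta' := \Theta$; for the inductive step, I would first use the single-step variant above to transport $\Delta_0 \lift{\obisi} \Theta$ along $\Delta_0 \ar{\hat{\tau}} \Delta_1$, obtaining some $\Theta_1$ with $\Theta \dar{\hat{\tau}} \Theta_1$ and $\Delta_1 \lift{\obisi} \Theta_1$, and then apply the induction hypothesis to the remaining chain of length $n-1$. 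Concatenation works because $\dar{\hat{\tau}}$ is reflexive and transitive by definition.

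For the case $\alpha = a \in \Act$, I would chain three applications: the $\tau$-case just established to match the initial $\dar{\hat{\tau}}$, yielding some $\Theta_1$ with $\Theta \dar{\hat{\tau}} \Theta_1$ and $\Xi_1 \lift{\obisi} \Theta_1$; then Proposition~\ref{prop:sbisi.transfer} (applied with $\alpha = a$, noting that $\ar{\hat{a}} = \ar{a}$ on states so the transfer property in Definition~\ref{def:bisim} applies directly) to get $\Theta_2$ with $\Theta_1 \dar{\hat{a}} \Theta_2$ and $\Xi_2 \lift{\obisi} \Theta_2$; and finally the $\tau$-case once more to handle the trailing $\dar{\hat{\tau}}$. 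Stitching the three weak transitions together, and using the definition $\dar{\hat{a}} = \dar{\hat{\tau}} \ar{\hat{a}} \dar{\hat{\tau}}$, produces the required $\Theta \dar{\hat{a}} \Theta'$ with $\Delta' \lift{\obisi} \Theta'$. I do not anticipate any real obstacle; the only point that requires some care is verifying the single-step extension to $\ar{\hat{\tau}}$, which as indicated above is immediate once one accounts for the stay-put disjunct.
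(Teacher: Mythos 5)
Your proposal is correct and follows essentially the same route as the paper: first establish the weak internal transfer property for $\dar{\hat{\tau}}$ from Proposition~\ref{prop:sbisi.transfer}, then chain it with that proposition across the decomposition $\dar{\hat{\tau}}\ar{\hat{a}}\dar{\hat{\tau}}$ and conclude by transitivity of $\dar{\hat{\tau}}$. The paper leaves the single-step extension to $\ar{\hat{\tau}}$ and the induction on the $\tau$-chain implicit (``it is not difficult to show''), whereas you spell them out; the stay-put disjunct you flag is exactly the point that needs checking.
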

\begin{proof}
By Proposition~\ref{prop:sbisi.transfer}  it is not difficult to show
that

\emph{(*) If $\Delta \lift{\obisi} \Theta$ and $\Delta \dar{\hat{\tau}} \Delta'$ then
  there is some $\Theta'$ with $\Theta \dar{\hat{\tau}} \Theta'$ and
  $\Delta' \lift{\obisi} \Theta'$.}

  Suppose $\Delta \dar{\alpha} \Delta'$ and $\Delta \;\lift{\obisi}\;
  \Theta$.  If $\alpha$ is $\tau$ then the required $\Theta'$ follows by
  an application of property (*).  
Otherwise, by definition we know $\Delta \dar{\hat{\tau}} \Delta_1,\; \Delta_1
\ar{\alpha} \Delta_2$ and $\Delta_2 \dar{\hat{\tau}} \Delta'$. An application of property
(*) gives a $\Theta_1$ such that $\Theta \dar{\hat{\tau}} \Theta_1$
and $\Delta_1 \lift{\obisi} \Theta_1$. An application of Proposition~\ref{prop:sbisi.transfer} gives
a $\Theta_2$ such that $\Theta_1 \dar{\hat{\alpha}} \Theta_2$ and 
$\Delta_2 \lift{\obisi} \Theta_2$. Finally another application of
property (*)
gives $\Theta_2 \dar{\hat{\tau}} \Theta'$ such that $\Delta' \lift{\obisi} \Theta'$.
The result now follows from the transitivity of $\dar{\hat{\tau}}$.
\end{proof}

\begin{theorem}\label{thm:bisim.equiv}
$\obisi$ is an equivalence relation.
\end{theorem}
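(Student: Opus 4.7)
The plan is to verify the three defining properties of an equivalence relation in turn, leaning on the maximality of $\obisi$ and on the distribution-level machinery developed earlier in this section.

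For reflexivity, I would check that the identity relation $I = \{(\CC,\CC) \mid \CC\in\Con\}$ is an open bisimulation. The conditions on $\qv$ and $\ptr$ hold trivially, and any transition $\CE(\CC)\ar{\alpha}\Delta$ is matched by itself since $\Delta \lift{I} \Delta$ (decompose $\Delta$ into its point-mass components). Hence $I\subseteq \obisi$. Symmetry follows because the definition of open bisimulation is explicitly symmetric in $\aRel$ and $\aRel^{-1}$: applying this to $\obisi$ and invoking maximality gives $\obisi^{-1}\subseteq\obisi$, and the reverse inclusion is immediate by taking inverses.

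The main work is transitivity. I would show that $\mathcal{S} = \obisi\cdot\obisi$ is itself an open bisimulation, so that maximality yields $\obisi\cdot\obisi\subseteq\obisi$. Suppose $\CC\,\mathcal{S}\,\CE$, witnessed by some $\CD$ with $\CC\obisi\CD\obisi\CE$. The conditions on $\qv$ and $\ptr$ transfer through $\CD$. For the matching condition, fix any $\CF\in\so(\CH_{\overline{\qv(\CC)}})$ and any transition $\CF(\CC)\ar{\alpha}\Delta$. From $\CC\obisi\CD$ I obtain $\Theta_1$ with $\CF(\CD)\dar{\hat{\alpha}}\Theta_1$ and $\Delta\lift{\obisi}\Theta_1$. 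By Proposition~\ref{prop:obisi.so}, $\obisi$ is closed under super-operator application, so $\CF(\CD)\obisi\CF(\CE)$, and hence $\pdist{\CF(\CD)}\lift{\obisi}\pdist{\CF(\CE)}$. Corollary~\ref{cor:tran.prev} then produces $\Theta_2$ with $\CF(\CE)\dar{\hat{\alpha}}\Theta_2$ and $\Theta_1\lift{\obisi}\Theta_2$, and Proposition~\ref{prop:lift2} composes the two steps to give $\Delta\lift{\mathcal{S}}\Theta_2$. The symmetric simulation clause for $\mathcal{S}$ is free, since $\mathcal{S}^{-1}=\obisi^{-1}\cdot\obisi^{-1}=\obisi\cdot\obisi=\mathcal{S}$ by the symmetry of $\obisi$ proved above.

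The main subtlety, and the reason the supporting lemmas are developed in the preceding subsections, lies in this transfer step: one cannot naively chase a single transition through two bisimulations, because the intermediate object $\Theta_1$ is a distribution rather than a state, so the second matching must be carried out at the distribution level. The combination of closure under super-operators (Proposition~\ref{prop:obisi.so}), the distribution-level transfer property (Corollary~\ref{cor:tran.prev}), and the composition law for lifted relations (Proposition~\ref{prop:lift2}) is exactly what is needed to make the argument go through.
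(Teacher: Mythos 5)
Your proof is correct and follows essentially the same route as the paper: show that $\obisi\cdot\obisi$ is an open bisimulation by matching a transition through the intermediate configuration, using Corollary~\ref{cor:tran.prev} for the distribution-level transfer and Proposition~\ref{prop:lift2} to compose the lifted relations. If anything, you are slightly more careful than the paper's own write-up, which suppresses the super-operator $\CE$ in the matching clause, whereas you make the appeal to Proposition~\ref{prop:obisi.so} (to get $\CF(\CD)\obisi\CF(\CE)$ before transferring) explicit.
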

\begin{proof}
It is trivial to see that $\obisi$ is reflexive and symmetric. For transitivity, we show that $\obisi\cdot\obisi$ is an open bisimulation relation. Since this is a symmetric relation,  we only need to show that it is an open simulation.
Suppose $\CC_1\obisi\CC_2$ and $\CC_2\obisi\CC_3$.
If $\CC_1\ar{\alpha}\Delta_1$, then there is some $\CC_2\dar{\hat{\alpha}}\Delta_2$ such that $\Delta_1\lift{\obisi}\Delta_2$, since $\CC_1\obisi\CC_2$.  From the condition $\CC_2\obisi\CC_3$ and Corollary~\ref{cor:tran.prev} it follows that $\CC_3\dar{\hat{\alpha}}\Delta_3$ and $\Delta_2\lift{\obisi}\Delta_3$. By Proposition~\ref{prop:lift2} we see that
$\Delta_1\lift{(\obisi\cdot\obisi)}\Delta_3$ as required.
\end{proof}

As a relation between configurations, $\obisi$ is preserved by all static constructors. 


\begin{proposition} \label{prop:ocongruence} If 
$\<P,\rho\>\obis \<Q,\sigma\>$ then
\begin{enumerate}
\item
$\<P\| R,\rho\>\obis \<Q\| R,\sigma\>$;
\item
$\<P[f],\rho\>\obis \<Q[f],\sigma\>$;
\item
$\<P\backslash L,\rho\>\obis \<Q\backslash L,\sigma\>$;
\item
$\<\iif\ b\ \then\ P,\rho\>\obis \<\iif\ b\ \then\ Q,\sigma\>$.
\end{enumerate}
\end{proposition}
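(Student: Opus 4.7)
The plan is to invoke Proposition~\ref{prop:ground} (ground bisimulation plus closure under super-operator application implies open bisimulation) for each of the four clauses. For each clause~$i$ we build a relation $\aRel_i\subseteq\Con\times\Con$ containing the target pair, check closure of $\aRel_i$ under every $\CE\in\so(\CH_{\overline{qv(\cdot)}})$, and check that $\aRel_i$ is a ground bisimulation; Proposition~\ref{prop:ground} then yields $\aRel_i\subseteq\obisi$, which is exactly what the four clauses claim.

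For clauses (2) and (3) the relations are the evident ones,
$$\aRel_2=\{(\pair{P[f],\rho},\pair{Q[f],\sigma})\mid \pair{P,\rho}\obisi\pair{Q,\sigma}\}\cup\obisi,$$
and similarly $\aRel_3$ with $\backslash L$ in place of $[f]$. In each case the outer $qv$ and $\ptr$ coincide with those of the inner configurations, so super-operator closure is inherited from Proposition~\ref{prop:obisi.so}. The ground-bisimulation check is a routine case analysis on the unique applicable rule ($\Rlts{Rel}$ or $\Rlts{Res}$): the outer transitions are in bijection with the inner ones, the match on the right is obtained by reapplying the same rule, and the lifted relation on target distributions follows from Corollary~\ref{cor:lifting.states} together with the linearity of $\Delta\mapsto\Delta[f]$ and $\Delta\mapsto\Delta\backslash L$. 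Clause~(4) is handled analogously with $\aRel_4=\{(\pair{\iif\ b\ \then\ P,\rho},\pair{\iif\ b\ \then\ Q,\sigma})\mid \pair{P,\rho}\obisi\pair{Q,\sigma}\}\cup\obisi$, possibly enriched with asymmetric pairs to accommodate the edge case in which the matching weak transition $\pair{Q,\sigma}\dar{\hat{\tau}}\pdist{\pair{Q,\sigma}}$ is trivial; when $\Op{b}=\texttt{false}$ both sides are deadlocked and when $\Op{b}=\texttt{true}$ the $\Rlts{Cho}$ rule identifies outer and inner transitions so the match transfers via $\obisi\subseteq\aRel_4$.

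The principal case is (1), parallel composition. Set
$$\aRel_1=\{(\pair{P\Cpar R,\rho},\pair{Q\Cpar R,\sigma})\mid \pair{P,\rho}\obisi\pair{Q,\sigma},\ qv(R)\cap(qv(P)\cup qv(Q))=\emptyset\}\cup\obisi.$$
The outer $qv$ and $\ptr$ follow from those of the inner pair and disjointness of $qv(R)$. For super-operator closure, any $\CE\in\so(\CH_{\overline{qv(P)\cup qv(R)}})$ can be viewed as an element of $\so(\CH_{\overline{qv(P)}})$ acting as the identity on $\CH_{qv(R)}$, so closure of $\obisi$ gives $\pair{P,\CE(\rho)}\obisi\pair{Q,\CE(\sigma)}$, whence the transformed parallel pair remains in $\aRel_1$. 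The ground-bisimulation check splits on how $\pair{P\Cpar R,\rho}\ar{\alpha}\Delta$ is derived: (i)~from a move of $P$ via $\Rlts{Int}$, matched by the corresponding open-bisimulation move of $\pair{Q,\sigma}$ recomposed with $R$; (ii)~from a move of $R$ via $\Rlts{Int}$ (including an $R$-measurement that splits the distribution), available identically on the right because $R$ manipulates only $qv(R)$ which is disjoint from $qv(P)=qv(Q)$; (iii)~from a synchronisation of $P$ and $R$ via $\Rlts{C-Com}$ or $\Rlts{Q-Com}$, matched by composing the complementary weak transition of $\pair{Q,\sigma}$ with $R$'s half through repeated applications of $\Rlts{Int}$ and one application of $\Rlts{C-Com}$ or $\Rlts{Q-Com}$.

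The main obstacle is case~(iii) under a quantum synchronisation. Given $\pair{P,\rho}\ar{\qc?r}\pair{P',\rho}$ and $\pair{R,\rho}\ar{\qc!r}\pair{R',\rho}$, one must lift the weak transition $\pair{Q,\sigma}\dar{\hat{\qc?r}}\Theta$ supplied by open bisimilarity to a weak $\tau$-transition of $\pair{Q\Cpar R,\sigma}$: each intermediate $\tau$-step of $Q$ is promoted via $\Rlts{Int}$, and the lone $\qc?r$-step is synchronised with $R$'s $\qc!r$ via $\Rlts{Q-Com}$ into a $\tau$-step; one then checks through Corollary~\ref{cor:lifting.states} and the linearity of $\cdot\Cpar R'$ that $\pdist{\pair{P'\Cpar R',\rho}}\lift{\aRel_1}\Theta\Cpar R'$. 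Two delicate bookkeeping points remain: first, the qCCS side conditions ($r\notin qv(\cdot)$ and disjointness across $\Cpar$) must be preserved at every intermediate configuration in $\Theta\Cpar R'$, which requires that the $qv$-invariant of $\obisi$ propagates through the composition; and second, for the $R$-measurement sub-case of (ii) the conditional post-measurement states $E^i_{\tilde q}\rho E^i_{\tilde q}/p_i$ and $E^i_{\tilde q}\sigma E^i_{\tilde q}/p_i$ must be shown to remain related by $\obisi$ even though the map $\rho\mapsto E^i_{\tilde q}\rho E^i_{\tilde q}/p_i$ is not itself trace-preserving, which is reduced to the trace-preserving super-operator clause by exploiting that $E^i_{\tilde q}$ acts on $\CH_{qv(R)}\subseteq\CH_{\overline{qv(P)}}$ together with the non-degeneracy of the measurement operators in qCCS.
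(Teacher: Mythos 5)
Your proposal follows essentially the same route as the paper: for each clause one exhibits the evident candidate relation, verifies closure under super-operator application via Proposition~\ref{prop:obisi.so}, checks the ground-bisimulation conditions by case analysis on how the outer transition is derived (for parallel composition: a move of $R$ alone, a move of $P$ alone, or a synchronisation between $P$ and $R$), and concludes with Proposition~\ref{prop:ground}. The paper writes out only clause (1) in this way and treats the $R$-measurement sub-case more casually than you do (it simply invokes Proposition~\ref{prop:obisi.so} for the conditional maps $\e_i$ without worrying that they are not trace-preserving), so your extra bookkeeping there is, if anything, more careful than the published argument.
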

\begin{proof} We only prove (1) as an example. Let 
$$\r=\{(\<P\| R, \rho\>, \<Q\| R, \sigma\>) \mid \<P, \rho\>\obis \<Q,\sigma\>\}.$$
It suffices to show that $\r$ is an open bisimulation. 
Suppose $\<P\|R, \rho\>\r \<Q\|R,\sigma\>$ where $\<P,\rho\>\obis
\<Q,\sigma\>$. By the definition of $\obis$ we have that $qv(P)=qv(Q)$ and
$\tr_{qv(P)} (\rho) = \tr_{qv(Q)} (\sigma)$. Thus
$qv(P\|R)=qv(Q\|R)$ and we infer that
\[\tr_{qv(P\|R)} (\rho) \;=\; \tr_{qv(P\|R)\backslash qv(P)}
\tr_{qv(P)}(\rho)
\;=\;\tr_{qv(Q\|R)\backslash qv(Q)}
\tr_{qv(Q)}(\sigma)
\;=\; \tr_{qv(Q\|R)} (\sigma).\]
By Proposition~\ref{prop:obisi.so}, we know that  $\pair{P,\CE(\rho)}
\obisi \pair{Q,\CE(\sigma)}$ for any
$\CE\in\so(\CH_{\overline{qv(P)}})$, from which it follows that  
$\pair{P\Cpar R,\CE(\rho)}
\aRel \pair{Q\Cpar R,\CE(\sigma)}$ for any
$\CE\in\so(\CH_{\overline{qv(P\Cpar R)}})$. In other words, $\aRel$ is
closed under super-operator application. Below we show that it is also
a ground bisimulation.

Suppose$\<P\|R, \rho\> \rto{\alpha}\Delta$ for some $\alpha$ and $\Delta$. There are three cases to consider.
\begin{enumerate}
\item The transition is caused by $R$ solely; that is, $\<R, \rho\> \rto{\alpha} \sum_i p_i \cdot\pdist{\<R_i, \e_i(\rho)\>}$, and $\Delta= \sum_i p_i \cdot\pdist{\<P\|R_i, \e_i(\rho)\>}$. Then $$\<Q\|R, \sigma\> \rto{\alpha}\Theta = \sum_i p_i \cdot\pdist{\<Q\|R_i, \e_i(\sigma)\>}.$$ Furthermore, by Proposition~\ref{prop:obisi.so}, we have $\<P, \e_i(\rho)\>\obis \<Q, \e_i(\sigma)\>$, and then $\Delta\lift{\aRel}\Theta$ by definition.

\item The transition is caused by $P$ solely; that is, $\<P, \rho\>
  \rto{\alpha} \Delta_1 = \sum_i p_i\cdot \pdist{\<P_i, \rho_i\>}$, and
  $\Delta= \sum_i p_i \cdot\pdist{\<P_i\|R, \rho_i\>}$. Since $\<P, \rho\>\obis \<Q, \sigma\>$.
Then $\<Q, \sigma\> \Rto{\hat{\alpha}}\Theta_1$ such that
$\Delta_1\lift{\obis} \Theta_1$.  By
Proposition~\ref{prop:act.dstruct}, we have the decomposition
$\Theta_1=\sum_{i}p_i\cdot\Theta_i$ with $\pdist{\<P_i,\rho_i\>}
\lift{\obis}\Theta_i$ for each $i$. 
So we have 
$$\<Q\|R, \sigma\> \Rto{\hat{\alpha}}\Theta = \sum_i p_i\cdot
\Theta_i\| R$$ and $\Delta\lift{\aRel}\Theta$ by definition.

\item The transition is caused by a communication
between $P$ and $R$. Without loss of generality, we assume that
$$\<P,\rho\>\rto{\qc?q} \<P',\rho\>,\ \ \  \<R,\rho\>\rto{\qc!q}\<R',\rho\>,$$  and $\Delta=\<P'\|R',\rho\>$.
By a simple induction on the rules in Figure~\ref{fig:opsem}, it is easy to see that $\<R, \eta\>\rto{\qc !q} \<R',\eta\>$ for any
$\eta\in \dh$.

From the assumption that $\<P, \rho\>\obis \<Q, \sigma\>$, we have
$$\<Q, \sigma\>\Rto{\qc ?q}\sum_{i\in I} p_i\cdot \pdist{\<Q_i,\sigma_i\>}$$  such that for any $i\in I$, it holds that
$\<P', \rho\>\obis \<Q_i, \sigma_i\>$ and
$$\<Q\|R,\sigma\>\Rto{\tau} \Theta=\sum_{i\in I} p_i \cdot\pdist{\<Q_i\|R',\sigma_i\>}.$$ 
Furthermore, for any $i\in I$,  we have
$$(\<P'\|R',\rho\>,\<Q_i\|R',\sigma_i\>)\in \r$$ by definition. That is, $\Delta\lift{\aRel}\Theta$ as required.
\end{enumerate} 
The symmetric form when $\<Q\|R,\e(\sigma)\>\rto{\alpha}\Theta$ can be similarly proved. 
So $\r$ is a  ground bisimulation on $Con$. It follows from
Proposition~\ref{prop:ground} that $\aRel$ is also an open bisimulation.
\hfill 
\end{proof}

Note that we do not have a counterpart of the above proposition for dynamic constructors such as prefix. As a counterexample, consider the following two configurations taken from \cite{FDY11}:
\[\<P,\rho\>  \qquad\text{and}\qquad \<Q,\rho\>\]
where $P=M_{0,1}[q;x].\nil$ with $M_{0,1}=\lambda_0 [|0\>]+\lambda_1 [|1\>]$ being the 1-qubit measurement according to the computational basis, $Q=I[q].\nil$, and $\rho = [|0\>]_q\otimes \sigma$ with
$\sigma\in \d(\h_{\overline{q}})$. We have that $\<P,\rho\>\obisi \<Q,\rho\>$, but 
 $\<H[q].P,\rho\>\not\obisi \<H[q].Q,\rho\>$ when $H$ is the Hadamard operator.

Nevertheless, as a relation between processes, $\obisi$ is preserved by almost all constructors of qCCS.
\begin{theorem}\label{thm:cong}
The relation $\obisi$ between processes is  preserved by all the constructors of qCCS except for summation.
\end{theorem}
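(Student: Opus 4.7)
The plan is to split the constructors into three kinds. The trivial cases are $\Cnil$, for which there is nothing to check, and the recursive constant $A(\tilde q;\tilde x)$, whose transition via rule $\Rlts{Cons}$ unfolds the body and consumes no process premise. The static constructors $\Cpar$, $[f]$, $\backslash L$, $\ifthen{b}{\cdot}$ are already handled at the configuration level by Proposition~\ref{prop:ocongruence}; to lift them to the process level it suffices to observe that each commutes with classical-value substitution, e.g.\ $(P\Cpar R)\sset{\tilde v/\tilde x}=P\sset{\tilde v/\tilde x}\Cpar R\sset{\tilde v/\tilde x}$, and then to apply the hypothesis $P\obisi Q$ at each $\tilde v,\rho$ followed by the corresponding clause of Proposition~\ref{prop:ocongruence}.

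The substantive cases are the dynamic prefixes $\tau.\cdot$, $c?x.\cdot$, $c!e.\cdot$, $\qc?q.\cdot$, $\qc!q.\cdot$, $\CE[\tilde q].\cdot$, $M[\tilde q;x].\cdot$. For each such $\pi$ I would introduce the candidate relation
\[
\r_\pi \;=\; \obisi \;\cup\; \setof{(\<\pi.P',\rho\>,\<\pi.Q',\rho\>)}{P\obisi Q,\ P'=P\sset{\tilde v/\tilde x},\ Q'=Q\sset{\tilde v/\tilde x},\ \rho\in\dh}
\]
and verify the two hypotheses of Proposition~\ref{prop:ground}. Closure under an external super-operator $\mathcal{F}\in\so(\CH_{\overline{qv(\pi.P')}})$ is immediate because $\mathcal{F}$ only transforms $\rho$ into $\mathcal{F}(\rho)$, leaving the prefix untouched, so the image pair is still of the prescribed shape. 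For the ground-bisimulation condition, the unique transition from $\<\pi.P',\rho\>$ is determined by the prefix rule in Figure~\ref{fig:opsem} and is matched by the analogous transition from $\<\pi.Q',\rho\>$. For $\CE[\tilde q].\cdot$ the two targets are $\<P',\CE_{\tilde q}(\rho)\>$ and $\<Q',\CE_{\tilde q}(\rho)\>$, which lie in $\obisi$ by the process-level hypothesis instantiated at the state $\CE_{\tilde q}(\rho)$ with valuation $\tilde v$. For $M[\tilde q;x].\cdot$ the two targets are the distributions $\sum_i p_i\cdot\pdist{\<P'[\lambda_i/x],E^i_{\tilde q}\rho E^i_{\tilde q}/p_i\>}$ and its $Q$-analogue, whose components are pairwise in $\obisi$ by the process-level hypothesis with the extended valuation $\tilde v,\lambda_i$ and state $E^i_{\tilde q}\rho E^i_{\tilde q}/p_i$; Corollary~\ref{cor:lifting.states} then gives the required $\lift{\obisi}$-relation between the distributions, hence $\lift{\r_\pi}$. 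The $\tau$, classical, and quantum I/O prefixes are simpler variants of the same pattern.

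The key insight is that the process-level definition of $\obisi$ quantifies over \emph{all} density operators and \emph{all} classical valuations. This is precisely what rescues the measurement and super-operator prefixes from the counterexample preceding the theorem: the configuration-level failure there arises because a single state $\rho$ does not propagate to the various post-measurement states, whereas the process-level hypothesis automatically supplies bisimilarity at every one of them. The main technical obstacle I anticipate is the bookkeeping around bound and free classical (and, for $\qc?q.\cdot$, quantum) variables: the input and measurement prefixes introduce new bindings that become free in the body, so the application of $P\obisi Q$ must use the extended classical valuation (for $c?x.\cdot$ and $M[\tilde q;x].\cdot$) or an $\alpha$-renaming to the fresh quantum name (for $\qc?q.\cdot$). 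Once these bindings are tracked carefully, each case reduces to matching a single prefix step.
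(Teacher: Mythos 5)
The paper itself does not prove this theorem: it simply points to the proof of Theorem 6.17 in [FDY11]. Your plan is therefore a genuinely different (and more self-contained) route, and it is the natural one given the machinery of this paper: you discharge the static constructors by commuting substitution with the constructor and invoking Proposition~\ref{prop:ocongruence}, and you handle each dynamic prefix by exhibiting a candidate relation and applying Proposition~\ref{prop:ground} (ground bisimulation plus closure under super-operator application). This is exactly the proof technique the paper advertises as the payoff of the open formulation, and your observation about why the configuration-level counterexample preceding the theorem does not contradict the process-level claim -- the process-level hypothesis quantifies over \emph{all} $\rho$ and all classical valuations, so bisimilarity is available at every post-measurement and post-operation state -- is the correct key point. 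The measurement case via Corollary~\ref{cor:lifting.states}, the use of the extended valuation for $c?x$ and $M[\tilde q;x]$, and the super-operator closure of each $\r_\pi$ are all sound.

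The one place where the plan is thinner than it can afford to be is the quantum input prefix $\qc?q.P$ (and, similarly, the unfolding of constants $A(\tilde r;\tilde v)$). Rule (Q-Inp) produces the configuration $\pair{P[r/q],\rho}$ for an arbitrary fresh $r$, so to close the argument you need that $P\obisi Q$ implies $P[r/q]\obisi Q[r/q]$ for injective renamings of quantum variables. This does \emph{not} follow from the process-level definition of $\obisi$, which quantifies over classical valuations $\tilde v$ and states $\rho$ but not over renamings of $qVar$; it requires a separate equivariance lemma, proved by showing that the renamed relation is again an open bisimulation when the renaming is accompanied by the corresponding permutation unitary on $\CH$ (conjugating $\rho$, the allowed super-operators in $\so(\CH_{\overline{qv(\cdot)}})$, and the quantum names occurring in action labels). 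You flag this as ``an $\alpha$-renaming to the fresh quantum name,'' but it is more than bookkeeping: it is the only step in your plan that does not reduce to an instantiation of the hypothesis, and it should be stated and proved as a lemma before the $\qc?q$ case can be closed. With that lemma in place, the rest of the plan goes through.
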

\begin{proof}
Similar to the proof of Theorem 6.17 in \cite{FDY11}, which shows the
congruence property of a notion of weak bisimulation.
\end{proof}

\subsection{An extensional equivalence}
We formally define three criteria, namely barb-preservation, reduction-closedness and composionality, in order to judge whether two processes are equivalent. This yields an extensional equivalence that turns out to coincide with open bisimilarity.
\begin{definition}[Barbs]
For $\Delta\in\dist{\Con}$ and $c\in cChan$ let
\[V_c(\Delta)=\sum\sset{\Delta(\CC)\mid \CC\ar{c!v} \mbox{ for some $v$}}. \]
We write 
$\CC\barb{c}{\geq p}$ whenever $\CC\dar{\hat{\tau}}\Delta$ for some
$\Delta$ with $V_c(\Delta)\geq p$.
\end{definition}


\begin{definition}
A relation $\aRel$ is 
\begin{itemize}
\item \emph{barb-preserving} if $\CC\aRel
\CD$ implies that $\CC\barb{c}{\geq p}$ iff $\CD\barb{c}{\geq p}$ for
  any classical channel $c$;
\item \emph{reduction-closed} if $\CC\aRel
\CD$ implies
\begin{itemize}
\item whenever $\CC\dar{\hat{\tau}}\Delta$, there exists $\Theta$
  such that $\CD\dar{\hat{\tau}}\Theta$ and
  $\Delta \lift{\aRel}\Theta$,
\item whenever $\CD\dar{\hat{\tau}}\Theta$, there exists $\Delta$
  such that $\CC\dar{\hat{\tau}}\Delta$ and
  $\Delta \lift{\aRel}\Theta$;
\end{itemize}
\item \emph{compositional} if
$\CC \aRel \CD$
                                                                       implies $(\CC||R) \aRel (\CD||R)$ for any process $R$ with $qv(R)$ disjoint from $qv(\CC)\cup qv(\CD)$, and 
$\aRel$ is closed under super-operator application.
\end{itemize}
\end{definition}

\begin{definition}[Reduction barbed congruence]
Let \emph{reduction barbed congruence}, denoted by $\rbc$, be the largest
relation over configurations which is barb-preserving,
reduction-closed and compositional, and furthermore, if $\CC \rbc \CD$ then $\qv(\CC)=\qv(\CD)$ and $\ptr(\CC)=\ptr(\CD)$.
\end{definition}

\begin{theorem}[Soundness]\label{thm:sound}
If $\CC\obisi \CD$ then $\CC \rbc \CD$.
\end{theorem}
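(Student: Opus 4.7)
The plan is to show that $\obisi$ itself satisfies all the defining conditions of $\rbc$; then $\obisi \subseteq \rbc$ follows from the maximality of $\rbc$. Preservation of free quantum variables and of partial traces is built into Definition~\ref{def:bisim}, so the work reduces to verifying reduction-closedness, compositionality, and barb-preservation.

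Reduction-closedness is essentially immediate: from $\CC \obisi \CD$ we have $\pdist{\CC} \lift{\obisi} \pdist{\CD}$, and Corollary~\ref{cor:tran.prev} transfers any $\CC \dar{\hat{\tau}} \Delta$ into a matching $\CD \dar{\hat{\tau}} \Theta$ with $\Delta \lift{\obisi} \Theta$. For compositionality, Proposition~\ref{prop:obisi.so} gives closure under super-operator application, and Proposition~\ref{prop:ocongruence}(1) gives preservation by parallel composition with any process $R$ whose quantum variables are disjoint from $\qv(\CC) = \qv(\CD)$.

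The only substantive step is barb-preservation. Suppose $\CC \obisi \CD$ and $\CC \barb{c}{\geq p}$, so $\CC \dar{\hat{\tau}} \Delta$ with $V_c(\Delta) \geq p$. Reduction-closedness supplies $\CD \dar{\hat{\tau}} \Theta$ with $\Delta \lift{\obisi} \Theta$, and Corollary~\ref{cor:lifting.states} lets us write $\Delta = \sum_{i \in I} p_i \cdot \pdist{\CC_i}$ and $\Theta = \sum_{i \in I} p_i \cdot \pdist{\CD_i}$ with $\CC_i \obisi \CD_i$. Let $I' \subseteq I$ index the components for which $\CC_i \ar{c!v_i}$ for some value $v_i$; then $V_c(\Delta) = \sum_{i \in I'} p_i \geq p$. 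For each such $i$, instantiating Definition~\ref{def:bisim} with the identity super-operator yields $\CD_i \dar{\hat{c!v_i}} \Theta_i'$, which factors as $\pdist{\CD_i} \dar{\hat{\tau}} \mathcal{F}_i$ followed by a lifted transition $\mathcal{F}_i \ar{c!v_i} \cdots$ for some intermediate distribution $\mathcal{F}_i$. By Lemma~\ref{lem:lifting.alt}, every configuration in the support of $\mathcal{F}_i$ must be able to emit on $c$, and since $\dar{\hat{\tau}}$ preserves total mass, $V_c(\mathcal{F}_i) = 1$. Taking the reflexive step for $i \notin I'$ and invoking linearity of $\dar{\hat{\tau}}$ (Proposition~\ref{prop:linear.move}), we obtain $\Theta \dar{\hat{\tau}} \Theta''$ where $\Theta'' = \sum_{i \in I'} p_i \cdot \mathcal{F}_i + \sum_{i \notin I'} p_i \cdot \pdist{\CD_i}$, so that $V_c(\Theta'') \geq \sum_{i \in I'} p_i \geq p$ and therefore $\CD \barb{c}{\geq p}$.

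The main obstacle is precisely this barb-preservation step: one has to aggregate pointwise output-matching information from many $\CC_i \obisi \CD_i$ into a single weak $\tau$-reduction of $\CD$ that collects enough barb mass. The key leverage is that $\dar{\hat{\tau}}$ preserves total mass and is linear, while the structural shape forced by the lifted transition $\mathcal{F}_i \ar{c!v_i} \cdots$ guarantees that every ounce of $\mathcal{F}_i$'s mass contributes to $V_c$.
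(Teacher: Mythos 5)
Your proposal is correct and follows essentially the same route as the paper: reduce to checking the three defining clauses of $\rbc$, dispatch reduction-closedness via Corollary~\ref{cor:tran.prev} and compositionality via Propositions~\ref{prop:obisi.so} and~\ref{prop:ocongruence}, then handle barbs by decomposing $\Delta\lift{\obisi}\Theta$, extracting a weak $c!v$-derivative of each $\CD_i$ with $i$ in the output-capable index set, and reassembling by linearity and reflexivity of $\dar{\hat{\tau}}$. The paper's argument for $V_c(\Theta_i)=1$ is the same observation you make about every configuration in the support of the intermediate distribution being able to emit on $c$.
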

\begin{proof}
By Corollary~\ref{cor:tran.prev} and Proposition~\ref{prop:ocongruence}
we know that $\obisi$ is reduction closed and compositional. It remains to show that $\obisi$
is barb-preserving.

Suppose $\CC\obisi\CD$ and $\CC\barb{c}{\geq p}$, for any classical channel $c$ and probability $p$; we need to show that $\CD\barb{c}{\geq p}$. We see from $\CC\barb{c}{\geq p}$ that $\CC\dar{\hat{\tau}}\Delta$ for some $\Delta$ with $V_c(\Delta)\geq p$. By Corollary~\ref{cor:tran.prev}, the relation $\obisi$ is reduction-closed. Hence, there exists $\Theta$ such that $\CD\dar{\hat{\tau}}\Theta$ and $\Delta\lift{\obisi}\Theta$. The latter means that 
\begin{equation}\label{eq:sound1}
\Delta=\sum_{i\in I}p_i\cdot\ \pdist{\CC_i} \qquad
\CC_i \obisi \CD_i \qquad
\Theta=\sum_{i\in I}p_i\cdot\ \pdist{\CD_i}
\end{equation}
By the second part of (\ref{eq:sound1}), if $\CC_i \ar{c!v}$ for some
action $c!v$, then $\CD_i\dar{c!v}$, that is
$\CD_i\dar{\hat{\tau}}\Theta_i\ar{c!v}$ for some distribution $\Theta_i$. Let
$I_c$ be the index set $\sset{i\in I \mid \CC_i\ar{c!v} \mbox{ for some
    $v$}}$, and $\Theta'$ be the distribution
\[(\sum_{i\in I_c} p_i\cdot \Theta_i) +(\sum_{i\in I\backslash
I_c} p_i\cdot\pdist{\CD_i}).\]
By the linearity and reflexivity of $\dar{\hat{\tau}}$,
Proposition~\ref{prop:linear.move}, we have $\Theta\dar{\hat{\tau}}\Theta'$. It follows from $\CD\dar{\hat{\tau}}\Theta\dar{\hat{\tau}}\Theta'$ that
$\CD\dar{\hat{\tau}}\Theta'$. It remains to show that $V_c(\Theta')\geq p$.

Note that for each $i\in I_c$ we have $\Theta_i\ar{c!v}$ for some
action $c!v$, which means that $V_c(\Theta_i)=1$. It follows that
\[\begin{array}{rcl}
V_c(\Theta') & = & \sum_{i\in I_c} p_i\cdot V_c(\Theta_i) +\sum_{i\in I\backslash
I_c} p_i\cdot V_c(\pdist{\CD_i})\\
& \geq &  \sum_{i\in I_c} p_i\cdot V_c(\Theta_i) \\
& = & \sum_{i\in I_c} p_i \\
& = & V_c(\Delta)\\
& \geq & p
\end{array}\]
\end{proof}

In order to obtain completeness, the converse of Theorem~\ref{thm:sound}, we make use of a proof technique that involves examing the barbs of processes in certain contexts; the
following technical lemma enhances this technique.
\begin{lemma}\label{lem:fresh}
If $\Delta||c!0 \lift{\rbc}\Theta||c!0$ where $c$ is a fresh classical
channel,
then $\Delta \lift{\rbc} \Theta$.
\end{lemma}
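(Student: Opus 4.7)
The plan is to reduce the lifted statement to a pointwise claim about configurations, and then to discharge that claim via the maximality of $\rbc$.

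First I would apply Corollary~\ref{cor:lifting.states} to the hypothesis $\Delta\Cpar c!0 \lift{\rbc} \Theta\Cpar c!0$ to obtain a finite index set $I$, probabilities $p_i$ and configurations $\CC_i,\CD_i$ with
\[\Delta\Cpar c!0 = \sum_{i\in I} p_i\cdot\pdist{\CC_i},\qquad \Theta\Cpar c!0 = \sum_{i\in I} p_i\cdot\pdist{\CD_i},\qquad \CC_i\rbc\CD_i.\]
By the definition of $\Delta\Cpar c!0$ in Section~\ref{sec:qccs}, every configuration in the support of $\Delta\Cpar c!0$ has the form $\pair{P_i\Cpar c!0,\rho_i}$ with $\pair{P_i,\rho_i}$ in the support of $\Delta$, and analogously $\CD_i = \pair{Q_i\Cpar c!0,\sigma_i}$ with $\pair{Q_i,\sigma_i}$ in the support of $\Theta$. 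Stripping the top-level $\Cpar c!0$ yields $\Delta = \sum_i p_i\cdot\pdist{\pair{P_i,\rho_i}}$ and $\Theta = \sum_i p_i\cdot\pdist{\pair{Q_i,\sigma_i}}$, so another application of Corollary~\ref{cor:lifting.states} reduces the goal to showing $\pair{P_i,\rho_i}\rbc\pair{Q_i,\sigma_i}$ for each $i$; in other words, it suffices to prove cancellation at the level of configurations: $\CC\Cpar c!0 \rbc \CD\Cpar c!0$ with $c$ fresh implies $\CC\rbc\CD$.

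To establish cancellation I would define
\[\r \;=\; \{\,(\CC,\CD) \mid \CC\Cpar c!0 \rbc \CD\Cpar c!0 \text{ for some classical channel } c \text{ fresh for } \CC \text{ and } \CD\,\}\]
and use the maximality of $\rbc$: it is enough to check that $\r$ is barb-preserving, reduction-closed and compositional, and that $\CC\r\CD$ implies $\qv(\CC)=\qv(\CD)$ and $\ptr(\CC)=\ptr(\CD)$. The $\qv$ and $\ptr$ equalities are immediate from $qv(P\Cpar c!0)=qv(P)$ and the fact that $c!0$ carries no quantum content. The driving observation for the remaining conditions is that, because $c$ is fresh, $c!0$ has no synchronisation partner in $\CC$ or $\CD$, so every $\tau$-move of $\CC\Cpar c!0$ arises from $\CC$ alone and produces a distribution of the form $\Xi\Cpar c!0$, and conversely every $\tau$-move of $\CC$ lifts to one of $\CC\Cpar c!0$; thus $\CC\dar{\hat\tau}\Delta'$ iff $\CC\Cpar c!0 \dar{\hat\tau}\Delta'\Cpar c!0$. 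Barbs transfer cleanly too: $V_d(\Xi\Cpar c!0)=V_d(\Xi)$ for any $d\ne c$, and barbs on $c$ itself are vacuous on both sides since $c$ does not occur in $\CC$ or $\CD$. Reduction-closedness of $\r$ then follows by repeating the stripping argument from the first paragraph: $\CC\dar{\hat\tau}\Delta'$ lifts to $\CC\Cpar c!0 \dar{\hat\tau} \Delta'\Cpar c!0$, is matched by $\CD\Cpar c!0 \dar{\hat\tau} \Theta''$ with $(\Delta'\Cpar c!0)\lift{\rbc}\Theta''$, and since every target in the trajectory retains a top-level $\Cpar c!0$ we have $\Theta''=\Theta'\Cpar c!0$ with $\CD\dar{\hat\tau}\Theta'$; a final application of Corollary~\ref{cor:lifting.states} plus stripping then gives $\Delta'\lift{\r}\Theta'$.

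For compositionality, closure of $\r$ under super-operator application is inherited directly from $\rbc$ via $qv(P\Cpar c!0)=qv(P)$. Closure under $\Cpar R$, when $qv(R)$ is disjoint from $qv(\CC)\cup qv(\CD)$, uses the compositionality of $\rbc$ with $R$: if the witnessing $c$ is also fresh for $R$, then $(\CC\Cpar c!0)\Cpar R \rbc (\CD\Cpar c!0)\Cpar R$ rearranges, up to associativity and commutativity of $\Cpar$, to $(\CC\Cpar R)\Cpar c!0 \rbc (\CD\Cpar R)\Cpar c!0$, as required. The hardest step I anticipate is precisely this side condition on $c$ when $R$ is arbitrary: one needs a small renaming-invariance lemma stating that if $\CC\Cpar c!0 \rbc \CD\Cpar c!0$ for one fresh $c$, then the same holds for any other fresh channel $c'$. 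This can be proved using the relabelling operator $[f]$ that swaps $c$ and $c'$, or circumvented by strengthening the definition of $\r$ to universally quantify over fresh channels; once this invariance is in hand, the rest of the verification assembles routinely.
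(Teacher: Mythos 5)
Your proposal is correct and its core is the same as the paper's: both define the auxiliary relation $\r=\{(\CC,\CD)\mid \CC\Cpar c!0\rbc\CD\Cpar c!0 \text{ for some fresh } c\}$ and conclude $\r\subseteq\;\rbc$ from the maximality of $\rbc$ after verifying the $\qv$/$\ptr$ conditions, compositionality, barb-preservation and reduction-closedness. The differences are in the sub-steps, and they are mostly to your credit. First, you make explicit the reduction from the lifted statement to the configuration-level cancellation claim via Corollary~\ref{cor:lifting.states} and injectivity of $P\mapsto P\Cpar c!0$; the paper leaves this step entirely implicit and only proves the pointwise claim. Second, for barb-preservation the paper routes through a tester $T=c_1?x_1.c?x.c_2!0$ and the compositionality of $\rbc$, whereas you argue directly that, $c$ being fresh, $V_d(\Xi\Cpar c!0)=V_d(\Xi)$ for $d\neq c$ and the $\hat\tau$-trajectories of $\CC\Cpar c!0$ are exactly those of $\CC$ with $c!0$ carried along; your argument is simpler and sound, since barb-preservation of $\rbc$ already gives agreement of $\CC\Cpar c!0$ and $\CD\Cpar c!0$ on all barbs without needing a context. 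Third, you correctly flag that closure of $\r$ under $\Cpar R$ needs the witnessing channel $c$ to be fresh for $R$ (plus associativity/commutativity of $\Cpar$ to rearrange $(\CC\Cpar c!0)\Cpar R$ into $(\CC\Cpar R)\Cpar c!0$), and that this requires either a renaming-invariance argument via $[f]$ or quantifying over all fresh channels in the definition of $\r$; the paper simply assumes ``$c$ fresh for $R$'' without discharging this, so your observation identifies and repairs a small gap in the published argument.
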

\begin{proof}
Consider the relation 
\[\aRel = \sset{(\CC,\CD) \mid \CC||c!0 \rbc \CD||c!0 \mbox{ for
    some  fresh channel $c$}}\]
We show that $\aRel\subseteq\rbc$. Suppose $\CC \aRel \CD$. Then
there is a fresh channel $c$ such that
$\CC||c!0 \rbc \CD||c!0$. Let $\CC=\<P,\rho\>$ and $\CD=\<Q,\sigma\>$.
By the definition of $\rbc$ we have 
$\qv(P||c!0)=\qv(Q||c!0)$ and $\ptr(\CC||c!0)=\ptr(\CD||c!0)$,
i.e. $\tr_{\qv(P||c!0)}(\rho)=\tr_{\qv(Q||c!0)}(\sigma)$.
Notice that 
\[
\qv(P) \;=\; \qv(P||c!0) \;=\; \qv(Q||c!0) \;=\; \qv(Q).
\]
It follows that
\[
\ptr(\CC)=\tr_{\qv(P)}(\rho) = \tr_{\qv(P||c!0)}(\rho)
=\tr_{\qv(Q||c!0)}(\sigma) = \ptr(\CD).
\]
Below we check that $\aRel$ is compositional, barb-preserving and reduction-closed.
\begin{enumerate}
\item $\aRel$ is compositional. For any process $R$ with $qv(R)$ disjoint from $qv(\CC)$ and $c$ fresh for $R$, since $\rbc$ is
  compositional, we have $(\CC||c!0||R) \rbc (\CD||c!0||R)$, which
  means $(\CC||R) \aRel (\CD||R)$.
By the compositionality of $\rbc$ we also have 
$\CE(\CC \Cpar c!0) \rbc \CE(\CD \Cpar c!0)$ for any $\CE\in\so(\CH_{\overline{qv(\CC\Cpar c!0)}})$. Since $qv(\CC) = qv(\CC \Cpar c!0)$, we have $\CE\in\so(\CH_{\overline{qv(\CC)}})$. Note that 
\[\CE(\CC)\Cpar c!0 \;=\; \CE(\CC\Cpar c!0)\;\rbc\; \CE(\CD\Cpar c!0) \;=\; \CE(\CD)\Cpar c!0.\] It follows that $\CE(\CC) \aRel \CE(\CD)$ and thus $\aRel$ is closed under super-operator application.

\item $\aRel$ is barb-preserving. Suppose $\CC\barb{c_1}{\geq p}$ for
  some channel $c_1$ and probability $p$. Let $c_2$ be some fresh
  channel. We construct the process $T$ by letting
\[T ~=~ c_1?x_1.c?x.c_2!0\]
for any $x_1$ and $x$. Since $\rbc$ is compositional, we have 
$(\CC \Cpar c!0 \Cpar T) \rbc (\CD \Cpar c!0 \Cpar T)$. Note that $(\CC \Cpar c!0 \Cpar T)\barb{c_2}{\geq p}$,
which implies $(\CD \Cpar c!0 \Cpar T)\barb{c_2}{\geq p}$. Since $c$ is fresh for
$\CD$, the latter has no potential to communicate at channel
$c$. Therefore, it must be the case that $\CD\barb{c_1}{\geq p}$.

\item $\aRel$ is reduction-closed. Suppose $\CC\dar{\hat{\tau}}\Delta$ for some
  distribution $\Delta$. Then $\CC||c!0 \dar{\hat{\tau}} \Delta||c!0$. Since
  $(\CC||c!0) \rbc (\CD||c!0)$, there is some $\Gamma$ such that
  $\CD||c!0 \dar{\hat{\tau}} \Gamma$ and $(\Delta||c!0)\lift{\rbc}\Gamma$. Note that
  $c$ is fresh for $\CD$, thus there is no communication between $\CD$
  and $c!0$. Therefore, it must be the case that $\Gamma=\Theta||c!0$
  for some $\Theta$ such that
$\CD\dar{\hat{\tau}}\Theta$. Thus, $(\Delta||c!0)\lift{\rbc} (\Theta||c!0)$,
i.e. $\Delta \lift{\aRel}\Theta$.
\end{enumerate}
\end{proof}

\begin{theorem}[Completeness]
If $\CC \rbc \CD$ then $\CC \obisi  \CD$.
\end{theorem}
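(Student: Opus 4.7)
The strategy is to prove that $\rbc$ itself is an open bisimulation; by maximality of $\obisi$ this yields the desired inclusion $\rbc \subseteq \obisi$. Appealing to Propositions~\ref{prop:ground} and~\ref{prop:obisi.so}, the task reduces to two obligations: (a) $\rbc$ is closed under super-operator application, which is immediate from the compositionality clause in the definition of $\rbc$; and (b) $\rbc$ is a ground bisimulation. Since the conditions $\qv(\CC)=\qv(\CD)$ and $\ptr(\CC)=\ptr(\CD)$ are already baked into the definition of $\rbc$, only the transfer property needs verification, and for $\alpha = \tau$ it is exactly the reduction-closedness of $\rbc$.

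For a visible action $\CC \ar{\alpha} \Delta$ I plan to use a standard context-based argument. For each shape of $\alpha$, design a probing process $T_\alpha$, built over one or two fresh classical channels, whose only useful interaction with $\CC$ is to synchronize on $\alpha$ and thereafter signal success by emitting a barb on a fresh channel $d$; concrete choices are $T_{c!v} = c?x.\ifthen{x=v}{d!0}$, $T_{c?v} = c!v.d!0$, $T_{\qc!q} = \qc?r.d!0$ with $r$ fresh, and for $T_{\qc?q}$ a process that supplies a fresh qubit on $\qc$ and then emits $d!0$. Compositionality gives $\CC \Cpar T_\alpha \rbc \CD \Cpar T_\alpha$; the synchronization produces $\CC \Cpar T_\alpha \dar{\hat\tau} \Delta \Cpar d!0$, so reduction closedness yields a matching $\CD \Cpar T_\alpha \dar{\hat\tau} \Gamma$ with $(\Delta \Cpar d!0) \lift{\rbc} \Gamma$. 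Because $d$ is fresh to $\CD$, the only way a configuration in the support of $\Gamma$ can exhibit a $d$-barb is for $\CD$ to have genuinely performed $\alpha$; barb preservation with respect to $d$ forces essentially all of the probability mass to arise this way, so (after unfolding the conditional via Rule Cho) $\Gamma$ has the form $\Theta \Cpar d!0$ with $\CD \dar{\hat\alpha} \Theta$. Lemma~\ref{lem:fresh} then strips the auxiliary channel to give $\Delta \lift{\rbc} \Theta$, closing the transfer.

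The main obstacle is the quantum input case $\alpha = \qc?q$. The probe has to supply a qubit on $\qc$, and one must choose it so that the legality constraints $q \notin qv(\cdot)$ and $qv(P) \cap qv(Q) = \emptyset$ remain in force and so that furnishing the qubit does not perturb the partial traces that $\rbc$ compares. Here closure of $\rbc$ under super-operator application is crucial: it lets me precompose with a super-operator that uniformly prepares the auxiliary qubit's initial state on both $\CC$ and $\CD$, preserving $\rbc$ and recovering the desired lifted relation after the synchronization. A subsidiary subtlety is the $T_{c!v}$ case, where pinning down the specific output value $v$ (as opposed to some unrelated $w$) requires a second fresh channel to mark the ``wrong-value'' branch of the conditional, so that barb preservation on $d$ alone suffices to isolate the matching synchronizations of $\CD$.
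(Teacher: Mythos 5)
Your overall architecture coincides with the paper's: both reduce the theorem, via Proposition~\ref{prop:ground}, to showing that $\rbc$ is a ground bisimulation and is closed under super-operator application, handle $\tau$ via reduction-closedness, and handle visible actions with probing contexts over fresh channels followed by Lemma~\ref{lem:fresh}. The gap lies in the design of the probes, and it is not merely the cosmetic issue you flag at the end. Consider $\alpha = c!v$ with your probe $T = c?x.\ifthen{x=v}{d!0}$. Reduction-closedness gives $\CD\Cpar T \dar{\hat{\tau}}\Gamma$ with $(\Delta\Cpar d!0)\lift{\rbc}\Gamma$, and you claim barb preservation on $d$ forces $\Gamma$ to have the form $\Theta\Cpar d!0$ with $\CD\dar{c!v}\Theta$. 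It does not: barbs are weak, so a configuration $\CD'\Cpar T$ in which the synchronisation has \emph{not yet} happened still satisfies $\barb{d}{\geq 1}$ whenever $\CD'$ can weakly output $v$ on $c$ with probability one. Hence $\Gamma$ need not have the syntactic shape required by Lemma~\ref{lem:fresh}, and nothing in your argument extracts the needed $\Theta$. A positive barb on a success channel certifies only that the synchronisation is \emph{possible}, never that it has \emph{occurred}. The paper's probes repair this with an extra summand, e.g.\ $T = c_1!0 + c?x.\ifthen{x=v}{(c_2!0+\tau.c_3!0)}$: the barb on $c_1$ \emph{disappears} exactly when the sum fires, so the negative requirement $\Gamma\not\!\barb{c_1}{>0}$ --- inherited from the target $\Delta\Cpar c_3!0$ through $\lift{\rbc}$, and genuinely constraining because the absence of a weak barb is preserved under further weak moves --- pins down that the communication has already taken place in every configuration in the support of $\Gamma$; the nested choice $c_2!0+\tau.c_3!0$ plays the same role one level further in. Your probes for $c?v$, $\qc?q$ and $\qc!q$ all share this defect, lacking the initial $c_1!0$ summand.

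A second, independent gap is in the quantum output case. Your probe $\qc?r.d!0$ discards the received qubit, so even granting the shape $\Gamma = \Theta\Cpar d!0$ you could conclude only $\CD\dar{\qc!q'}\Theta$ for \emph{some} $q'$, not for the required $q$. The paper's probe $c_1!0+\qc?r.(c_2!0+I[r].c_3!0)$ keeps $r$ among the free quantum variables of the residual precisely so that the condition $\qv(\cdot)=\qv(\cdot)$ built into $\rbc$ forces $q'=q$. Conversely, the difficulty you anticipate in the input case is not there: the probe simply outputs an existing quantum variable $r\notin qv(\CC)\cup qv(\CD)$ from the global state space; no state preparation or super-operator precomposition is needed.
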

\begin{proof}

Since $\rbc$ is closed under any super-operator application, by Proposition~\ref{prop:ground} it suffices to show that $\rbc$ is a ground
bisimulation. By the symmetry of $\rbc$, we only need to show that $\rbc$ is a ground simulation.
Suppose $\CC=\pair{P,\rho}$, $\CD=\pair{Q,\sigma}$ and
$\CC\rbc\CD$. By definition, we have $\qv(P)=\qv(Q)$ and
$\tr_{\qv(P)}(\rho)=\tr_{\qv(Q)}(\sigma)$. Suppose
$\CC\ar{\alpha}\Delta$. We distinguish several cases.
\begin{enumerate}
\item $\alpha\equiv\tau$. 
Since $\rbc$ is compositional, we have $(\CC \Cpar c!0) \rbc (\CD\Cpar c!0)$ for some fresh channel $c$.
Since $\rbc$ is reduction-closed, the reduction $\CC\Cpar c!0 \dar{\hat{\tau}}\Delta\Cpar c!0$ is matched by some $\Gamma$ such that $\CD\Cpar c!0 \dar{\hat{\tau}}\Gamma$ and $\Delta \Cpar c!0 \lift{\rbc} \Gamma$.
Since $c$ is fresh, there is no communication between $\CD$ and $c!0$, so it must be the case that $\Gamma$ has the form $\Theta\Cpar c!0$ with $\CD\dar{\hat{\tau}}\Theta$. It follows from Lemma~\ref{lem:fresh} and $\Delta \Cpar c!0 \lift{\rbc} \Theta\Cpar c!0$ that $\Delta\lift{\rbc}\Theta$.

\item $\alpha\equiv c!v$. Let $T$ be the process defined by
\[T~:=~ c_1!0+c?x.\ifthen{x=v}{(c_2!0 + \tau.c_3!0)}\]
  where $c_1, c_2$ and $c_3$ are fresh channels. 
Then
  $\CC||T\dar{\hat{\tau}}\Delta||c_3!v$. 
  
  Since $\CC\rbc\CD$ we know
  $\CC||T\rbc\CD||T$ by the compositionality of $\rbc$. Since
  $\rbc$ is reduction-closed, there is some $\Gamma$ such that
  $\CD||T\dar{\hat{\tau}}\Gamma$ and $\Delta||c_3!0 \lift{\rbc} \Gamma$. Since
  $\rbc$ is barb-preserving we have $\Gamma\not\!\barb{c_1}{>0}$, $\Gamma\not\!\barb{c_2}{>0}$
   and
  $\Gamma\barb{c_2}{\geq 1}$. Here we use the notation $\Gamma\not\!\barb{c_1}{>0}$ to
    mean that $\Gamma\barb{c_1}{\geq p}$ does not hold for any $p>0$.
It must be the case that $\Gamma\equiv
  \Theta||c_{3}!0$ for some $\Theta$ with $\CD\dar{c!v}\Theta$. By
  Lemma~\ref{lem:fresh} and $\Delta||c_3!0 \lift{\rbc} \Theta||c_3!0$, we have $\Delta\lift{\rbc}\Theta$.

\item $\alpha\equiv \qc ? q$. Let $T$ be the process defined by
\[T~:=~ c_1!0+\qc!r.c_2!0\]
  where $c_1$ and $c_2$ are fresh channels. Then
  $\CC||T\dar{\hat{\tau}}\Delta \Cpar c_2!0$. 
  Since $\CC\rbc\CD$ we know
  $\CC||T\rbc\CD||T$ by the compositionality of $\rbc$. Since
  $\rbc$ is reduction-closed, there is some $\Gamma$ such that
  $\CD||T\dar{\hat{\tau}}\Gamma$ and $\Delta||c_2!0 \rbc \Gamma$. Since
  $\rbc$ is barb-preserving we have $\Gamma\not\!\barb{c_1}{>0}$ and
  $\Gamma\barb{c_2}{\geq 1}$. It follows that
 $\CD\dar{\qc?q}\Theta$ and $\Gamma\equiv \Theta\Cpar c_{2}!0$, with
implicit assumption of $\alpha$-conversion. 
By Lemma~\ref{lem:fresh} and $\Delta\Cpar c_2!0 \lift{\rbc} \Theta\Cpar c_2!0$,
  we have $\Delta \lift{\rbc}\Theta$. 

The case when $\alpha\equiv c?x$ is similar.

\item $\alpha\equiv \qc ! q$. Let $T$ be the process defined by
\[T~:=~ c_1!0 + \qc ? r.(c_2!0 + I[r].c_3!0)\]
  where $c_1,c_2$ and $c_3$ are fresh channels. Then
  $\CC||T\dar{\hat{\tau}}\Delta \Cpar (c_2!0 + I[q].c_3!0)$. 
  Since $\CC\rbc\CD$ we know
  $\CC||T\rbc\CD||T$ by the compositionality of $\rbc$. Since
  $\rbc$ is reduction-closed, there is some $\Gamma$ such that
  $\CD||T\dar{\hat{\tau}}\Gamma$ and 
  \begin{equation}\label{eq:ytemp1}
  \Delta||(c_2!0 + I[q].c_3!0) \lift{\rbc} \Gamma.
  \end{equation}
  Since
  $\rbc$ is barb-preserving we have $\Gamma\not\!\barb{c_1}{> 0}$ and
  $\Gamma\barb{c_2}{\geq 1}$. It follows that
 $\CD\dar{\qc!q'}\Theta$ for some $q'\in qVar$, and $\Gamma\equiv \Theta\Cpar  (c_2!0 + I[q'].c_3!0)$. 
 Note that $\Delta||(c_2!0 + I[q].c_3!0)\ar{\tau}  \Delta||c_3!0$. To match this action, we have $\Gamma\dar{\hat{\tau}}\Gamma'$ for some $\Gamma'$
such that $\Delta||c_3!0\lift{\rbc} \Gamma'$. As a consequence, we have $\Gamma'\barb{c_3}{\geq 1}$ but $\Gamma'\not\!\barb{c_2}{> 0}$, so
$\Gamma'\equiv \Theta'\Cpar  c_3!0$ for some $\Theta'$ with  $\Theta\dar{\hat{\tau}}\Theta'$, which implies $\CD\dar{\qc!q'}\Theta'$. Now by Lemma~\ref{lem:fresh} and $\Delta\Cpar c_3!0 \lift{\rbc} \Theta'\Cpar c_3!0$,
  we derive $\Delta \lift{\rbc}\Theta'$. 
  
 Finally, we claim that $q=q'$. Otherwise from Eq.(\ref{eq:ytemp1}), we know $q'\in qv(\Delta)$ but $q'\not\in qv(\Theta)$.
 That contradicts the fact that $\Delta||c_3!0\lift{\rbc} \Gamma'$ as  $qv(\Theta')\subseteq qv(\Theta)$.

\end{enumerate}
\end{proof}

\subsection{Modal characterisation}
We extend the Hennessy-Milner logic by adding a probabilistic choice
modality to express the bebaviour of distributions, as in \cite{DGHM09}, and a
super-operator modality to express trace-preserving super-operator
application, as well as atomic formulae involving projectors for dealing with density operators.

\begin{definition}\label{def:logic}
The class $\CL$ of modal formulae over $\Act$, ranged over by
$\phi$, is defined by the following grammar:
\[\begin{array}{rcl}
\phi & := & E_{\tilde{q}}^{\geq p}\mid \bigwedge_{i\in I}\phi_i \mid \diam{\alpha}\psi \mid
\neg\phi \mid \CE.\phi\\
\psi &:=& \bigoplus_{i\in I}p_i\cdot\phi_i \end{array}\] 
where $\alpha\in \Act_{\tau}$, $\e$ is a super-operator, and $E$ is a projector associated with a certain subspace of $\h_{\widetilde{q}}$.
We call
$\phi$ a \emph{configuration formula} and $\psi$ a \emph{distribution
formula}. Note that a distribution formula $\psi$ only appears as
the continuation of a diamond modality $\diam{\alpha}\psi$. 

The \emph{satisfaction relation}\index{satisfaction relation}
$\models \subseteq S\times\CL $ is defined by
\begin{itemize}
\item $\CC\models E_{\tilde{q}}^{\geq p}$ 
if $qv(\CC)\cap\tilde{q}=\emptyset$ and $\tr(E_{\tilde{q}}\rho)\geq p$ where $\CC=\pair{P,\rho}$.
\item $\CC\models \bigwedge_{i\in I}\phi_i $ if $\CC\models\phi_i$ for all $i\in I$.
\item $\CC\models \diam{\alpha}\psi$ if for some $\Delta\in \dist{\Con}$,
  $\CC\dar{\hat{\alpha}}\Delta$ and $\Delta\models\psi$.
\item $\CC\models\neg\phi$ if it is not the case that
  $\CC\models\phi$.
\item $\CC\models\CE.\phi$ if $\CE\in\so(\CH_{\overline{qv(\CC)}})$ and $\CE(\CC)\models\phi$.
\item $\Delta\models\bigoplus_{i\in I}p_i\cdot\phi_i$ if there are
  $\Delta_i\in\dist{\Con}$, for all $i\in I, \CD\in\support{\Delta_i}$, with
  $\CD\models\phi_i$, such that $\Delta=\sum_{i\in I}p_i\cdot\Delta_i$.
\end{itemize}
\end{definition}
With a slight abuse of notation, we write $\Delta\models\psi$ above
to mean that $\Delta$ satisfies the distribution formula $\psi$.
A logical equivalence arises from the logic naturally: we write $\CC
=^\CL \CD$ if $\CC\models\phi \Leftrightarrow \CD\models\phi$ for all
$\phi\in\CL$.

It turns out that $\CL$ is adequate with respect to open
bisimilarity.
\begin{theorem}
\label{p:modal.characterisation}
Let $\CC$ and $\CD$ be any two configurations in a pLTS. Then
$\CC\obisi \CD$ if and only if $\CC =^\CL \CD$.
\end{theorem}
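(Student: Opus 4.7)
The plan is to prove the two implications separately. For soundness ($\obisi \subseteq {=^{\CL}}$) I would induct on the syntax of formulae; for completeness (${=^{\CL}} \subseteq \obisi$) I would verify that the relation $\aRel = \{(\CC,\CD) \mid \CC =^{\CL} \CD\}$ is closed under super-operator application and is a ground bisimulation, so that Proposition~\ref{prop:ground} delivers $\aRel \subseteq \obisi$.

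For soundness I would establish simultaneously, by induction on $\phi$ and $\psi$, that $\CC \obisi \CD$ with $\CC \models \phi$ forces $\CD \models \phi$, and that $\Delta \lift{\obisi} \Theta$ with $\Delta \models \psi$ forces $\Theta \models \psi$. The atomic case $E_{\tilde{q}}^{\geq p}$ is immediate because $\obisi$ preserves $\qv$ and $\ptr$, and since $\tilde{q}\cap \qv(\CC)=\emptyset$ the value $\tr(E_{\tilde{q}}\rho)$ depends only on $\ptr(\CC)$. Conjunction and negation are routine. The super-operator modality $\CE.\phi$ is handled by closure of $\obisi$ under super-operator application (Proposition~\ref{prop:obisi.so}), and the diamond modality by Corollary~\ref{cor:tran.prev} together with the second inductive claim. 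For the distribution formula $\bigoplus_{i\in I} p_i \cdot \phi_i$, I would invoke left-decomposability of $\lift{\obisi}$ (Proposition~\ref{prop:act.dstruct}) to lift a decomposition $\Delta = \sum_i p_i \cdot \Delta_i$ to a matching $\Theta = \sum_i p_i \cdot \Theta_i$ with $\Delta_i \lift{\obisi} \Theta_i$, and then apply the first claim configuration-wise on the supports.

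For completeness, closure of $\aRel$ under super-operator application is immediate from the $\CE.\phi$ modality: $\CE(\CC) \models \phi$ iff $\CC \models \CE.\phi$ iff $\CD \models \CE.\phi$ iff $\CE(\CD) \models \phi$. Equality of $\qv$ and $\ptr$ for $\aRel$-related configurations is extracted by probing with the atomic formulae $E_{\tilde{q}}^{\geq p}$ (and their negations) over a countable dense family of projectors and rational thresholds. For the ground-bisimulation condition, given $\CC \ar{\alpha} \Delta$ I would partition $\support{\Delta}$ into $=^{\CL}$-classes $C_1,\dots,C_n$ with cumulative weights $p_k = \sum_{\CC' \in C_k}\Delta(\CC')$, pick distinguishing formulae $\phi_k$ separating $C_k$ from the other classes, and observe that $\CC \models \diam{\alpha}\bigoplus_k p_k \cdot \phi_k$; logical equivalence transports this formula to $\CD$, yielding $\CD \dar{\hat{\alpha}} \Theta$ admitting a matching decomposition, from which $\Delta \lift{\aRel} \Theta$ follows by Corollary~\ref{cor:lifting.states}.

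The hard part will be this transfer step in the completeness direction, because the underlying pLTS branches heavily: the atomic layer ranges over all projectors and rationals, and the $\CE$-modality over all super-operators. The standard Hennessy--Milner partitioning argument requires some form of image-finiteness to make the finite conjunctions and finite probabilistic choices well-defined, so I would either impose image-finiteness of $\ar{\alpha}$-transitions (plausible at the level of qCCS syntax) or replace the finite choice in $\bigoplus$ by an infinitary version indexed by all $=^{\CL}$-classes in $\support{\Delta}$. A secondary subtlety is that $E_{\tilde{q}}^{\geq p}$ expresses only a lower bound on $\tr(E_{\tilde{q}}\rho)$, so recovering the precise equality $\ptr(\CC)=\ptr(\CD)$ from $\CC =^{\CL} \CD$ needs a careful sweep over rational $p$ using negation to pin down both inequalities.
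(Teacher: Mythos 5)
Your soundness half follows the paper's argument essentially verbatim (structural induction on $\phi$, with the distribution case handled by left-decomposability and Corollary~\ref{cor:tran.prev}), and your use of the $\CE.\phi$ modality to get closure of $=^{\CL}$ under super-operator application is also the paper's step. The gap is in the transfer step of the completeness direction, and it is exactly the step you flag as ``hard'' --- but neither of your proposed remedies repairs it. If you choose $\phi_k$ only to separate the class $C_k$ from the \emph{other classes inside} $\support{\Delta}$, then from $\CD\models\diam{\alpha}\bigoplus_k p_k\cdot\phi_k$ you obtain some $\Theta=\sum_k p_k\cdot\Theta_k$ whose components merely \emph{satisfy} $\phi_k$; a state in $\support{\Theta_k}$ satisfying that one formula need not be logically equivalent to the states of $C_k$, so the conclusion $\Delta\lift{(=^{\CL})}\Theta$ via Corollary~\ref{cor:lifting.states} does not follow. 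Your first fix (image-finiteness of $\ar{\alpha}$) does not help because the derivative being matched is a \emph{weak} one, $\CD\dar{\hat{\alpha}}\Theta$, and the lifting construction admits arbitrary convex combinations, so $\CD$ has uncountably many candidate derivatives even when the strong relation is image-finite. Your second fix (an infinitary $\bigoplus$ over the $=^{\CL}$-classes of $\support{\Delta}$) attacks the wrong quantity: $\support{\Delta}$ is already finite for a single strong transition, and enlarging the probabilistic choice does nothing to strengthen the individual $\phi_k$.

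The paper's resolution is to exploit the fact that the conjunction $\bigwedge_{i\in I}$ in Definition~\ref{def:logic} is \emph{not} restricted to finite index sets. One forms the set $T$ of all ``bad'' derivatives of $\CD$, i.e.\ those $\Theta$ with $\CD\dar{\hat{\alpha}}\Theta$ decomposing with the weights of $\Delta$ but failing logical equivalence at some support element; for each $\Theta\in T$ one picks a single distinguishing formula $\phi_\Theta$ (using negation to orient it), and attaches to each $\CC'\in\support{\Delta}$ the possibly infinite conjunction of all $\phi_\Theta$ with $\CC'_\Theta=\CC'$. The formula $\diam{\alpha}\bigoplus_{\CC'}\Delta(\CC')\cdot\bigwedge_{\{\Theta\in T\,\mid\,\CC'_\Theta=\CC'\}}\phi_\Theta$ is satisfied by $\CC$, hence by $\CD$, and any witnessing derivative $\Theta^\ast$ is forced to lie outside $T$, which is precisely $\Delta\lift{(=^{\CL})}\Theta^\ast$. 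So the quantification runs over candidate derivatives of $\CD$, not over classes of $\support{\Delta}$. (Your secondary worry about recovering $\ptr(\CC)=\ptr(\CD)$ is not a real obstacle: the paper argues directly by contradiction, producing a single projector $E_{\tilde{q}}$ with $\tr(E_{\tilde{q}}\sigma)<\tr(E_{\tilde{q}}\rho)$ and taking $p=\tr(E_{\tilde{q}}\rho)$; no sweep over rationals is needed.)
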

\begin{proof}
($\Rightarrow$)
 Suppose $\CC \obisi \CD$, we show that $\CC\models\phi \Leftrightarrow
\CD\models\phi$. Since $\obisi$ is symmetric, it suffices to prove
that $\CC\models\phi$ implies $\CD\models\phi$
 by structural induction on $\phi$.
\begin{itemize}
\item Let $\CC\models E_{\tilde{q}}^{\geq p}$.  Then $qv(\CC)\cap
  \tilde{q}=\emptyset$ and $\tr(E_{\tilde{q}}\rho)\geq p$. Since
  $\CC\obisi \CD$, we have $qv(\CC)=qv(\CD)$ and
  $\ptr(\CC)=\ptr(\CD)$. Thus $qv(\CD)\cap \tilde{q}=\emptyset$. Let
  $\CC=\pair{P,\rho}$ and $\CD=\pair{Q,\sigma}$. We can infer that
\[\begin{array}{rcl}
\tr(E_{\tilde{q}}\sigma) & = & \tr_{\overline{qv(Q)}}\tr_{qv(Q)}(E_{\tilde{q}}\sigma)\\
& = & \tr_{\overline{qv(Q)}}E_{\tilde{q}}(\tr_{qv(Q)}(\sigma))\\ 
& = & \tr_{\overline{qv(P)}}E_{\tilde{q}}(\tr_{qv(P)}(\rho))\\ 
& = & \tr_{\overline{qv(P)}}\tr_{qv(P)}(E_{\tilde{q}}\rho)\\
& = & \tr(E_{\tilde{q}}\rho)\\
& \geq & p .
\end{array}\]
It follows that
$\CD\models E_{\tilde{q}}^{\geq p}$.
\item Let $\CC \models\bigwedge_{i\in I}\phi_i$. Then $\CC\models\phi_i$
  for each $i\in I$. So by induction $\CD\models\phi_i$, and we have
  $\CD\models \bigwedge_{i\in I}\phi_i$. 
\item Let $\CC\models\neg\phi$. So $\CC\not\models\phi$, and by induction
  we have $\CD\not\models\phi$. Thus $\CD\models\neg\phi$. 
\item Let $\CC\models\diam{\alpha}\bigoplus_{i\in I}p_i\cdot\phi_i$. Then $\CC\dar{\hat{\alpha}}\Delta$ and $\Delta\models\bigoplus_{i\in I}p_i\cdot\phi_i$
  for some $\Delta$. So
  $\Delta=\sum_{i\in i}p_i\cdot\Delta_i$ and for all $i\in I$ and
  $\CC'\in\support{\Delta_i}$ we have $\CC'\models\phi_i$. Since
  $\CC\obisi \CD$, by Corollary~\ref{cor:tran.prev} there is some $\Theta$ with $\CD\dar{\hat{\alpha}}\Theta$
  and $\Delta\lift{\obisi} \Theta$. Since the lifted relation is left-decomposable,  we have that $\Theta=\sum_{i\in
  I}p_i\cdot\Theta_i$ and $\Delta_i\lift{\obisi}\Theta_i$. It follows
  that for each $\CD'\in\support{\Theta_i}$ there is some
  $\CC'\in\support{\Delta_i}$ with $\CC'\obisi \CD'$.
  So by induction we have $\CD'\models\phi_i$ for all $\CD'\in\support{\Theta_i}$ with $i\in I$.
  Therefore, we have $\Theta\models\bigoplus_{i\in
  I}p_i\cdot\phi_i$. It follows that $\CD\models\diam{\alpha}\bigoplus_{i\in
  I}p_i\cdot\phi_i$.

\item Let $\CC\models \CE.\phi$. Then
  $\CE\in\so(\CH_{\overline{qv(\CC)}})$ and
  $\CE(\CC)\models\phi$. Since $\CC\obisi\CD$, we have
  $\CE(\CC) \obisi \CE(\CD)$ by Proposition~\ref{prop:obisi.so} and $qv(\CC) =
  qv(\CD)$. By induction, we have $\CE(\CD)\models\phi$. It follows
  that $\CD\models\CE.\phi$. 
\end{itemize}

($\Leftarrow$)  Suppose $\CC =^\CL \CD$.
We first show that $qv(\CC)=qv(\CD)$ and $\ptr(\CC)=\ptr(\CD)$.
For any $\tilde{q}$, if $\tilde{q}\cap qv(\CC)=\emptyset$ then
$\CC\models I^{\geq 1}_{\tilde{q}}$. Since $\CC =^\CL \CD$ we have
$\CD\models I^{\geq 1}_{\tilde{q}}$, and thus $\tilde{q}\cap
qv(\CD)=\emptyset$. It follows that $\qv(\CC)\supseteq \qv(\CD)$. By
the symmetry of $=^\CL$, this implies $qv(\CC)=qv(\CD)$.
Now let $\CC=\pair{P,\rho}$ and $\CD=\pair{Q,\sigma}$. 
Suppose for a contradiction that $\tr_{qv(P)}\rho \not=
\tr_{qv(P)}\sigma$. Then there exists a projection $E$ on
$\tilde{q}$ with $\tilde{q}\cap qv(P)=\emptyset$ and
$\tr(E_{\tilde{q}}\sigma) < \tr(E_{\tilde{q}}\rho)$. Let
$p=\tr(E_{\tilde{q}}\rho)$. Then $\pair{P,\rho}\models
E_{\tilde{q}}^{\geq p}$ while $\pair{Q,\sigma}\not\models
E_{\tilde{q}}^{\geq p}$, contradicting the assumption that $\CC =^\CL \CD$.

Next, we show that the relation $=^\CL$ is a ground
bisimulation. Suppose $\CC =^\CL \CD$ and $\CC\ar{\alpha}\Delta$. We have to
show that there is some $\Theta$ with $\CD\dar{\hat{\alpha}}\Theta$ and $\Delta
\lift{(=^\CL)} \Theta$. Consider the set
\begin{equation}\label{eq:T}
T:=\{\Theta \mid \CD\dar{\hat{\alpha}}\Theta \wedge \Theta=\sum_{\CC'\in\support{\Delta}}\Delta(\CC')\cdot \Theta
_{\CC'}\wedge \exists \CC'\in\support{\Delta},\exists
\CD'\in\support{\Theta_{\CC'}}: \CC'\not=^\CL \CD'\}
\end{equation}
 For each $\Theta\in T$, there must be some
 $\CC'_\Theta\in\support{\Delta}$ and
 $\CD'_\Theta\in\support{\Theta_{\CC'_\Theta}}$ such that (i) either there is a formula $\phi_{\Theta}$ with
$\CC'_\Theta\models\phi_{\Theta}$ but
$\CD'_\Theta\not\models\phi_{\Theta}$ (ii) or there is a formula
$\phi'_{\Theta}$ with $\CD'_\Theta\models\phi'_{\Theta}$ but
$\CC'_\Theta\not\models\phi'_{\Theta}$. In the latter case we set
$\phi_{\Theta}=\neg\phi'_{\Theta}$ and return back to the former
case. So for each $\CC'\in\support{\Delta}$ it holds that
$\CC'\models\bigwedge_{\sset{\Theta\in T\mid \CC'_\Theta =
\CC'}}\phi_\Theta$ and for each $\Theta\in T$ with $\CC'_\Theta=\CC'$
there is some $\CD'_{\Theta}\in\support{\Theta_{\CC'}}$ with
$\CD'_{\Theta}\not\models \bigwedge_{\sset{\Theta\in T\mid \CC'_\Theta =
\CC'}}\phi_\Theta$. Let
\begin{equation}\label{eq:phi}
\phi:=\diam{\alpha}\bigoplus_{\CC'\in\support{\Delta}}\Delta(\CC')\cdot\bigwedge_{\sset{\Theta\in
    T\mid \CC'_{\Theta}=\CC'}}\phi_{\Theta}.
\end{equation}
It is clear that $\CC\models\phi$, hence $\CD\models\phi$ by $\CC=^\CL \CD$.
It follows that there must be a $\Theta^\ast$ with
$\CD\dar{\hat{\alpha}}\Theta^\ast$,
$\Theta^\ast=\sum_{\CC'\in\support{\Delta}}\Delta(\CC')\cdot\Theta^\ast_{\CC'}$
and for each $\CC'\in\support{\Delta},
\CD'\in\support{\Theta^\ast_{\CC'}}$ we have
$\CD'\models\bigwedge_{\sset{\Theta\in T\mid \CC'_\Theta =
\CC'}}\phi_\Theta$. This means that $\Theta^\ast\not\in T$ and hence
for each $\CC'\in\support{\Delta}, \CD'\in\support{\Theta^\ast_{\CC'}}$ we
have $\CC'=^\CL \CD'$. It follows that $\Delta\lift{(=^\CL)}
\Theta^\ast$. By symmetry all transitions of $\CD$ can be matched up
by transitions of $\CC$.

Finally, we prove that the relation $=^{\CL}$ is closed under
super-operator application. That is, for any
$\CE\in\so(\CH_{\overline{qv(\CC)}})$ we need to show that $\CC =^\CL
\CD$ implies
$\CE(\CC)
=^\CL \CE(\CD)$. Suppose $\CC =^\CL
\CD$ and let $\phi$ be any formula such that
$\CE(\CC)\models\phi$. We have $\CC\models\CE.\phi$. It follows from
$\CC =^\CL \CD$ that $qv(\CC)=\qv(\CD)$ and $\CD\models
\CE.\phi$. Therefore, we obtain $\CE(\CD)\models\phi$. By symmetry if
$\phi$ is satisfied by $\CE(\CD)$ then it is also satisfied by
$\CE(\CC)$. In other words, we have $\CE(\CC)
=^\CL \CE(\CD)$.

Now by appealing to Proposition~\ref{prop:ground} we see that $=^\CL$
is an open bisimulation, thus $=^\CL \;\subseteq\; \obisi$.
\end{proof}
Note that the set $T$ in (\ref{eq:T}) is infinite in general as $\CD$
may have infinitely many different derivatives, hence we
have to use infinite conjunction in (\ref{eq:phi}). This is the reason
that we cannot restrict ourselves to finite or binary conjunction in Definition~\ref{def:logic}. 
\section{Examples}
\label{sec:examples}
\newcommand {\qcf}[1] {{\sf{#1}}}
\newcommand{\con}[3]{\iif\ {#1}\ \then\ {#2}\ \eelse\ {#3}}


BB84, the first quantum key distribution protocol developed by Bennett and Brassard in 1984 \cite{BB84}, provides a provably secure way to
create a private key between two parties, say, Alice and Bob.
Its security relies on the basic property of quantum mechanics that information gain about a quantum state is only possible at the expense of changing the state, if the  states to be distinguished are not orthogonal.
The basic BB84 protocol goes as follows:
\begin{enumerate}
\item[(1)] Alice randomly creates two strings of bits $\tilde{B}_a$ and  $\tilde{K}_a$, each with size $n$.
\item[(2)] Alice prepares a string of qubits $\tilde{q}$, with size $n$, such that 
the $i$th qubit of $\tilde{q}$ is $|x_y\>$ where $x$ and $y$ are the $i$th bits of $\tilde{B}_a$ and  $\tilde{K}_a$, respectively,
and $|0_0\> = |0\>$, $|0_1\> = |1\>$,
$|1_0\> = |+\>$, and $|1_1\> = |-\>$. Here the symbols $|+\>$ and $|-\>$ have their usual meaning:
\[|+\> = (|0\> + |1\>)/\sqrt{2} \qquad\mbox{and} \qquad 
|-\> = (|0\> - |1\>)/\sqrt{2}. \]
\item[(3)] Alice sends the qubit string $\tilde{q}$ to Bob.
\item[(4)] Bob randomly generates a string of bits $\tilde{B}_b$ with size $n$.
\item[(5)] Bob measures each qubit received from Alice according to a basis determined by the bits he generated: if the $i$th bit of $\tilde{B}_b$ is $k$ then
he measures with $\{|k_0\>, |k_1\>\}$, $k=0,1$. Let the measurement results be $\tilde{K}_b$, which is also a string of bits with size $n$.
\item[(6)] Bob sends his choice of measurement bases $\tilde{B}_b$ back to Alice, and upon receiving the information, Alice sends her bases  
$\tilde{B}_a$ to Bob.
\item[(7)] Alice and Bob determine at which positions the bit strings $\tilde{B}_a$ and  $\tilde{B}_b$ are equal. They discard the bits in 
$\tilde{K}_a$ and $\tilde{K}_b$ where the corresponding bits of $\tilde{B}_a$ and  $\tilde{B}_b$ do not match.
\end{enumerate}
After the execution of the basic BB84 protocol above, the remaining bits of $\tilde{K}_a$ and $\tilde{K}_b$, denoted by $\tilde{K}'_a$ and $\tilde{K}'_b$ respectively, should be the same, provided that the channels used are perfect, and no eavesdropper exists. 

To detect a potentially existing eavesdropper Eve, Alice and Bob proceed as follows:

\begin{enumerate}
\item[(8)] Alice randomly chooses $\support{k/2}$, where $k$ is the size of $\tilde{K}'_a$, bits of $\tilde{K}'_a$, denoted by $\tilde{K}''_a$, and sends
Bob $\tilde{K}''_a$ and their indexes in the original string $\tilde{K}'_a$.

\item[(9)]  Upon receiving the information from Alice, Bob sends back to Alice his substring $\tilde{K}''_b$ of $\tilde{K}'_b$ according to the indexes received from Alice. 

\item[(10)] Alice and Bob check if the strings $\tilde{K}''_a$ and $\tilde{K}''_b$ are equal. If yes, then the remaining substrings $\tilde{K}^f_a$ (resp. $\tilde{K}^f_b$) of $\tilde{K}'_a$ (resp. $\tilde{K}'_b$) by deleting $\tilde{K}''_a$ (resp. $\tilde{K}''_b$) are the secure keys shared by Alice and Bob. Otherwise, an eavesdropper is detected, and the protocol halts without generating any secure keys.
\end{enumerate}


For simplicity, we omit the processes of information reconciliation and privacy amplification. Now we describe the above protocol in our language of qCCS.  To ease the notations, we assume a special measurement $Ran[\tilde{q}; \tilde{x}]$ which can create a string of $n$ random bits, independent of the initial states of the $\tilde{q}$ system, and store it to $\tilde{x}$. In effect, $Ran[\tilde{q}; \tilde{x}]=
Set^n_+[\tilde{q}].M^n_{0,1}[\tilde{q};\tilde{x}].Set^n_0[\tilde{q}]$. Then the basic BB84 protocol can be defined as
\begin{eqnarray*}
Alice&\define& Ran[\tilde{q}; \tilde{B}_a].Ran[\tilde{q}; \tilde{K}_a].Set_{\tilde{K}_a}[\tilde{q}].H_{\tilde{B}_a}[\tilde{q}].\qcf{A2B}!\tilde{q}.WaitA(\tilde{B}_a, \tilde{K}_a)
\\
\\
WaitA(\tilde{B}_a, \tilde{K}_a)&\define& b2a?\tilde{B}_b.a2b!\tilde{B}_a.key_a!cmp(\tilde{K}_a, \tilde{B}_a, \tilde{B}_b).\nil\\
\\
Bob&\define& \qcf{A2B}?\tilde{q}.Ran[\tilde{q}'; \tilde{B}_b].M_{\tilde{B}_b}[\tilde{q};\tilde{K}_b].b2a!\tilde{B}_b.WaitB(\tilde{B}_b, \tilde{K}_b)\\
\\
WaitB(\tilde{B}_b, \tilde{K}_b)&\define& a2b?\tilde{B}_a.key_b!cmp(\tilde{K}_b, \tilde{B}_a, \tilde{B}_b).\nil\\
\\
BB84 &\define& (Alice\| Bob)\backslash\{a2b, b2a, \qcf{A2B}\}
\end{eqnarray*}
where $Set^n_+$ is the super-operator which sets each of the $n$ qubits it applies on to $|+\>$,
$M_{\tilde{y}}[\tilde{q}; \tilde{K}_b]$ is the quantum measurement on $\tilde{q}$ according to the basis determined by $\tilde{y}$, i.e., for each $1\leq k\leq n$, it measures $q_k$ with respect to the basis $\{|0\>, |1\>\}$ (reps. $\{|+\>, |-\>\}$) if $y(k)=0$ (resp. 1), and stores the result into $\tilde{K}_b(k)$. $M^n_{0,1}$ is the same as $M_{0\cdots 0}$, and $H_{\tilde{y}}[\tilde{q}]$ has a similar meaning with $M_{\tilde{y}}[\tilde{q}; \tilde{K}_b]$. We also abuse the notion slightly by writing
$\e_{\tilde{B}}[\tilde{q}].P$ when we mean 
$\sum_{\tilde{x}=0^n}^{1^n}(\iif\ \tilde{B}=\tilde{x}\ \then\ \e_{\tilde{x}}[\tilde{q}].P)$ where $0^n$ is the all zero string of size $n$. The function $cmp$ takes a triple of strings $\tilde{x},\tilde{y},\tilde{z}$ with the same size as inputs, and
returns the substring of $\tilde{x}$ where the corresponding bits of $\tilde{y}$ and  $\tilde{z}$ match. When $\tilde{y}$ and  $\tilde{z}$ match nowhere, we let $cmp(\tilde{x}, \tilde{y}, \tilde{z})=\epsilon$, the empty string.

To show the correctness of this basic form of BB84 protocol, we have two choices. The first one is to employ the concept of bisimulation. Let
\begin{eqnarray*}
BB84_{spc} &\define & Ran[\tilde{q}; \tilde{B}_a].Ran[\tilde{q}; \tilde{K}_b].Ran[\tilde{q}'; \tilde{B}_b].\\
&&(key_a!cmp(\tilde{K}_b, \tilde{B}_a, \tilde{B}_b).\nil\| key_b!cmp(\tilde{K}_b, \tilde{B}_a, \tilde{B}_b).\nil).
\end{eqnarray*}
 The pLTSs of $BB84$ and $BB84_{spe}$ for the special case of $n=2$ can be depicted as in Figure~\ref{fig:exm1}, where for simplicity, we only specify the branch where $\tilde{B}_a=\tilde{K}_a=00$. Each arrow in the graph denotes a sequence of $\tau$ actions, and all probabilistic distributions are uniform. The strings at the bottom line are the outputs of the protocol. Then
it can be easily checked from the pLTSs that $BB84\obis BB84_{spe}$. The key is, for each extra branch in $BB84$ caused by the measurement of Bob (the $\tilde{K}_b$ line), the final states are bisimilar; they all output the same string.

\begin{figure}[t]
\[
\begin{array}{ll}
\includegraphics[width=0.5\textwidth]{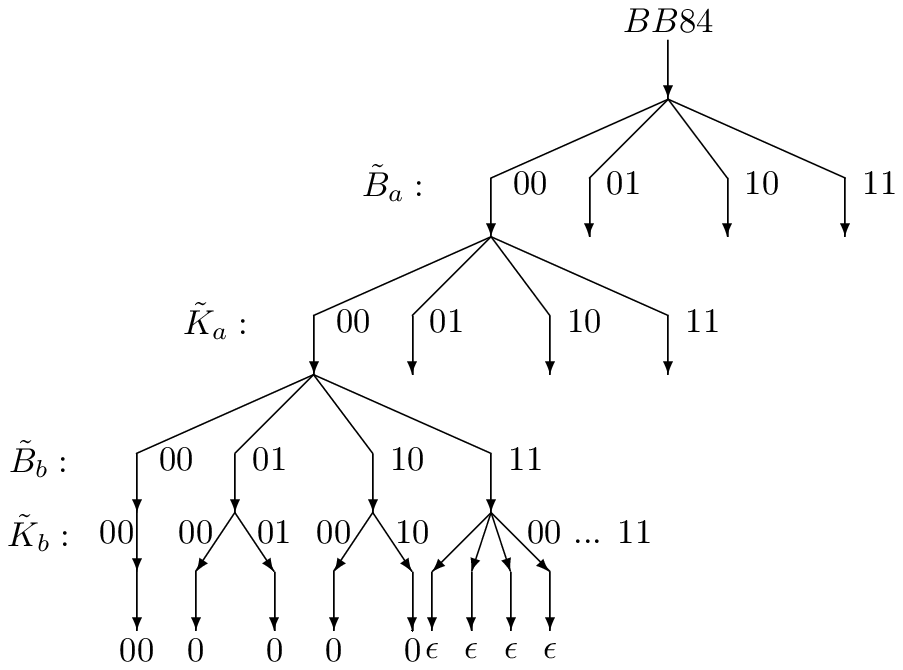}
\includegraphics[width=0.5\textwidth]{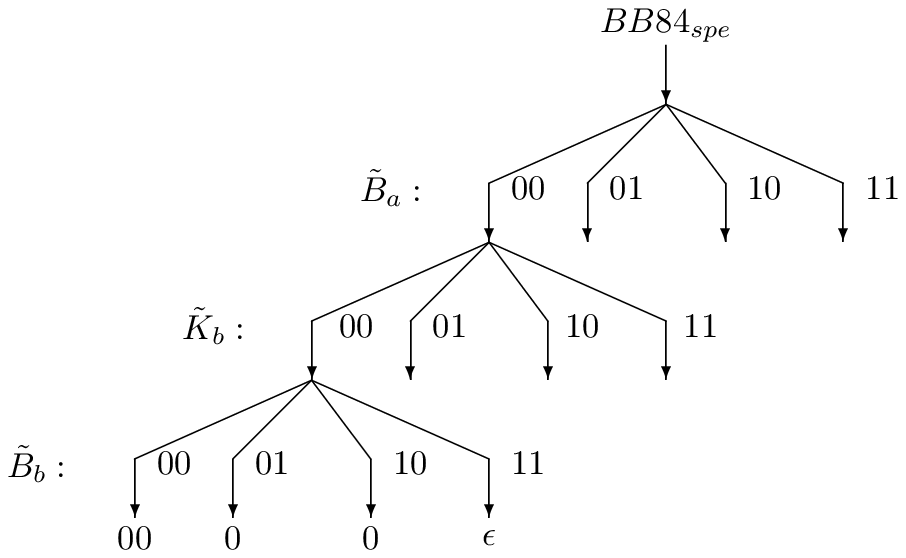}
\end{array}
 \] 
 \caption{pLTSs for $BB84$ and $BB84_{spe}$\label{fig:exm1}}
\end{figure}

The second choice is to use logic formulae. Let
\begin{eqnarray*}
TestBB84&\define& (BB84\| key_a?\tilde{K}'_a.key_b?\tilde{K}'_b.\\
&&(\iif\ \tilde{K}'_a=\tilde{K}'_b\ \then\ suc!0.\nil\ \eelse\ fail!0))\backslash\{key_a, key_b\},
\end{eqnarray*}
and 
$$\psi_p =  \<suc!0\>true \wedge \neg\<\tau\> (p\cdot \<fail!0\>true + (1-p)\cdot true)$$ where $true$ is the abbreviation of $\bigwedge_{i\in \emptyset} \phi_i$. 
It is not difficult to show $TestBB84\models \psi_p$ for any $p>0$.

Now we proceed to describe the protocol where an eavesdropper can be detected. 
\begin{eqnarray*}
Alice'&\define & (Alice\|  key_a?\tilde{K}_a'.Pstr_{|\tilde{K}_a'|}[\tilde{q}_a;\tilde{x}].a2b!\tilde{x}.a2b!SubStr(\tilde{K}_a', \tilde{x}).b2a?\tilde{K}_b''.\\
&&(\iif\ SubStr(\tilde{K}_a', \tilde{x})=\tilde{K}_b''\ \then\ key'_a!RemStr(\tilde{K}_a', \tilde{x}).\nil\\
&&\hspace{11em}\eelse\ alarm_a!0.\nil )))\backslash\{key_a\}\\
\\
Bob'&\define& (Bob\| key_b?\tilde{K}_b'.a2b?\tilde{x}.a2b?\tilde{K}_a''.b2a!SubStr(\tilde{K}_b', \tilde{x}).\\
&&(\iif\ SubStr(\tilde{K}_b', \tilde{x})=\tilde{K}_a''\ \then\ key'_b!RemStr(\tilde{K}_b', \tilde{x}).\nil\\
&&\hspace{11em}\eelse\ alarm_b!0.\nil))\backslash\{key_b\}\\
\\
BB84'&\define& Alice'\| Bob'
\end{eqnarray*}
where $|\tilde{x}|$ is the size of $\tilde{x}$, the function $SubStr(\tilde{K}_a', \tilde{x})$ returns the substring of $\tilde{K}_a'$
at the indexes specified by $\tilde{x}$, and $RemStr(\tilde{K}_a', \tilde{x})$ returns the remaining substring of $\tilde{K}_a'$ by deleting $SubStr(\tilde{K}_a', \tilde{x})$. The special measurement $Pstr_{m}$, which is similar to $Ran$, randomly generates a $\lceil{m/2}\rceil$-sized string of indexes from $1,\dots,m$. 

For the capacity of a potential eavesdropper Eve, we assume that she has complete control of the quantum channel, but can only listen on the classical channels  between Alice and Bob. That is, she can do any quantum operations on the communicated qubits from Alice and Bob, one of the extreme cases being keeping the qubits from Alice while creating and sending to Bob some fresh ones, with the same size, prepared by herself. But for classical communication, Eve can only copy and resend the bits without altering them, since Alice and Bob can choose to send them through a broadcasting channel. Note that perfect copying of the qubits transmitted through the quantum channel from Alice to Bob is prohibited by the basic laws of quantum mechanics, since the potential quantum states sent, $|0\>, |1\>, |+\>$, and $|-\>$ in this protocol, are nonorthogonal. With these natural assumptions, an eavesdropper Eve can be described as:
\begin{eqnarray*}
Eve &\define& \qcf{A2E}? \tilde{q}.\e[\tilde{q}'',\tilde{q}].M^n_{0,1}[\tilde{q}'';\tilde{K}_e].\qcf{E2B}!
\tilde{q}.WaitE(\tilde{K}_e)\\
WaitE(\tilde{K}_e)&\define& b2e?\tilde{B}_b.e2a!\tilde{B}_b.a2e?\tilde{B}_a.e2b!\tilde{B}_a.
a2e?\tilde{x}.e2b!\tilde{x}.\\
&&a2e?\tilde{K}_a''.e2b!\tilde{K}_a''.
b2e?\tilde{K}_b''.e2a!\tilde{K}_b''.
key'_e!gkey(\tilde{K}_e, \tilde{B}_e, \tilde{B}_a, \tilde{B}_b, \tilde{K}_a'', \tilde{K}_b'', \tilde{x}).\nil
\end{eqnarray*}
where $\e$ is a super-operator, and $gkey$ is the function Eve used to generate her guess of the key from the classical information transmitted between  Alice and Bob. Then a practical running BB84 protocol, with the existence of an eavesdropper, goes as follows 
\begin{eqnarray*}
BB84_E &\define&  (Alice'[f_a]\| Eve \| Bob'[f_b])\backslash\{a2e, b2e, e2a, e2b, \qcf{A2E}, \qcf{E2B}\}
\end{eqnarray*}
where $f_a$ and $f_b$ are relabelling functions such that $f_a(a2b)=a2e, f_a(b2a)=e2a,$ $f_a(\qcf{A2B})=\qcf{A2E}$, and $f_b(a2b)=e2b, f_b(b2a)=b2e, f_b(\qcf{A2B})=\qcf{E2B}$.

To get a taste of the security of $BB84'$, we consider a special case where Eve's strategy is to simply measure the qubits sent by Alice, according to randomly guessed  bases, to get the keys. She then prepares and sends to Bob a fresh sequence of qubits, employing the same method Alice used to encode keys, but using her own guess of bases and the keys she obtained. That is, we define
\begin{eqnarray*}
Eve' &\define& \qcf{A2E}? \tilde{q}.Ran[\tilde{q}''; \tilde{B}_e].M_{\tilde{B}_e}[\tilde{q};\tilde{K}_e].Set_{\tilde{K}_e}[\tilde{q}].H_{\tilde{B}_e}[\tilde{q}].\qcf{E2B}! \tilde{q}.WaitE(\tilde{K}_e)
\end{eqnarray*}
Now let $BB84_E'$ be the protocol obtained from $BB84_E$ by replacing Eve by Eve$'$, and letting the function $gkey$ simply return its first parameter. Let
\begin{eqnarray*}
TestBB84'&\define& (BB84'_E\| key'_a?\tilde{x}.key'_b?\tilde{y}.key'_e?\tilde{z}.(\iif\ \tilde{x}\neq \tilde{y}\ \then\ fail!0.\nil\\
&&\hspace{12em}\eelse\ key_e!\tilde{z}.skey!\tilde{x}.\nil))\backslash\{key'_a, key'_b, key'_e\}.
\end{eqnarray*} 
It is generally very complicated to prove the security of the full $BB84$ protocol, even for the simplified $Eve'$ presented above. Here we choose to reduce $TestBB84'$ to a simpler process which is easier for further verification. To be specific, we can show that $TestBB84'$ is bisimilar to the following process:
\begin{eqnarray*}
TB&\define& Ran[\tilde{q}; \tilde{B}_a].Ran[\tilde{q}; \tilde{K}_a].Ran[\tilde{q}''; \tilde{B}_e].
Ran'_{\tilde{B}_a, \tilde{B}_e, \tilde{K}_a}[\tilde{q}; \tilde{K}_e].Ran[\tilde{q}'; \tilde{B}_b].\\
&&Ran'_{\tilde{B}_{e}, \tilde{B}_{b}, \tilde{K}_{e}}[\tilde{q};\tilde{K}_b].Pstr_{|\tilde{K}_{ab}|}[\tilde{q}_a;\tilde{x}].\\
&&(\iif\ \tilde{K}_{ab}=\tilde{K}_{ba} \ \then\ key_e!\tilde{K}_{e}.skey!RemStr(\tilde{K}_{ab}, \tilde{x}).\nil\\
&&\hspace{6em}\ \eelse\ (\iif\ \tilde{K}^{\tilde{x}}_{ab}\neq \tilde{K}^{\tilde{x}}_{ba} \ \then\ alarm_a!0.\nil\| alarm_b!0.\nil\\\
&&\hspace{9em} \eelse\ fail!0.\nil))
\end{eqnarray*} 
where to ease the notations, we let $\tilde{K}_{ab}=cmp(\tilde{K}_a, \tilde{B}_a, \tilde{B}_b)$, $\tilde{K}_{ba}=cmp(\tilde{K}_b, \tilde{B}_a, \tilde{B}_b)$, $\tilde{K}^{\tilde{x}}_{ab}=SubStr(\tilde{K}_{ab}, \tilde{x})$, and $\tilde{K}^{\tilde{x}}_{ba}=SubStr(\tilde{K}_{ba}, \tilde{x})$. Similar to $Ran$, the special measurement $Ran'$ here, which takes three parameters, delivers a string of $n$ bits. For example, $Ran_{\tilde{B}_{a}, \tilde{B}_{e}, \tilde{K}_{a}}[\tilde{q}; \tilde{K}_e]$
will first generate a string of $n-|\tilde{K}_{ae}|$ random bits $\tilde{x}$, replace with $\tilde{x}$ the substring of $\tilde{K}_{a}$ at the positions where $\tilde{B}_{a}$ and $\tilde{B}_{e}$ do not match, and store the string after the replacement in $\tilde{K}_{e}$. 
 

\section{Conclusion and related work}
\label{sec:conl}
In our opinion, bisimulations should be considered as a proof methodology for demonstrating behavioural
equivalence between systems, rather than providing the definition of
the extensional behavioural equivalence itself.  We have adapted the
well-known \emph{reduction barbed congruence} used for a variety of
process calculi \cite{ht92,RathkeS08,FournetG05,DH11a}, to obtain a
touchstone extensional behavioural equivalence for quantum processes considered in 
\cite{FDY11}. In the literature there are also minor variations on the
formulation of reduction barbed congruence, often called
\emph{contextual equivalence} or \emph{barbed congruence}.  See \cite{FournetG05,pibook} for a discussion of the
differences.

We have defined a notion of open bisimulations, 
 which
provides both a sound and complete coinductive proof methodology for
establishing the equivalence between qCCS processes.
The operational semantics of this language is given in terms of probabilistic labelled transition systems.
Moreover, we have generalised Hennessy-Milner logic to express the behaviour of quantum processes. In the resulting quantum logic, logical equivalence coincides with open bisimilarity.

To conclude this paper, we would like to compare the open bisimulation
defined here with other bisimulations for quantum processes already
proposed in the literature.
Jorrand and Lalire \cite{JL04, La06} defined a branching bisimulation for their QPAlg, which identifies quantum processes whose associated
graphs have the same branching structure. However, their bisimulation 
cannot always distinguish different quantum operations, as quantum states are only compared when they are input or output. More seriously, the derived bisimilarity is not a congruence; it is not preserved by restriction.
Bisimulation defined in \cite{FDJY07} indeed distinguishes different quantum operations but it
works well only for finite processes, since quantum states are compared after all actions have been performed. Again, it is not preserved by restriction, and whether it is preserved by parallel composition still remains open, although the positive answer is affirmed in two special cases. In \cite{YFDJ09}, a congruent (strong) bisimulation was proposed for a special model where no classical datum is involved. However, as many important quantum communication protocols such as superdense coding and teleportation cannot be described in that model, the scope of its application is very limited. Furthermore, as all quantum operations are regarded as visible in \cite{YFDJ09}, the bisimulation is too strong; it distinguishes two different sequences of quantum operations even when they have the same effect as a whole.

The first general (works for general models where both classical and quantum data are involved, and recursive definition is allowed), weak (quantum operations are regarded as invisible, so that they can be combined arbitrarily), and congruent bisimulation for quantum processes was defined in~\cite{FDY11}. It differentiates quantum input, to match which an arbitrarily chosen super-operator should be considered, from other actions. 
The open bisimulation in this paper makes a step further by treating the super-operator application in an `open' style: applying super-operators before an action to be matched is selected. This makes it possible to separate ground bisimulation and the closedness under super-operator application, and by doing so, we are able to provide not only a neater and simpler definition, but also a powerful technique for proving bisimilarity.

It is easy to prove that the bisimulation in~\cite{FDY11} is both a ground bisimulation and closed under super-operator application. Then by  
Proposition~\ref{prop:ground}, it is also an open bisimulation; in other words, the bisimilarity presented in the current paper is coarser than that defined in~\cite{FDY11}. Whether or not they are actually the same is an interesting question, and we leave it for further investigation.
\bibliographystyle{plain}
\bibliography{ref}

\end{document}